\newtheorem{theorem}{Theorem}[section]
\newtheorem{lemma}[theorem]{Lemma}
\newtheorem{proposition}[theorem]{Proposition}
\newtheorem{corollary}[theorem]{Corollary}
\newtheorem{example}[theorem]{Example}
\begin{document}

\markboth{Somphong Jitman and Supawadee Prugsapitak}
{Enumeration of Self-Dual Cyclic Codes of some Specific Lengths over Finite Fields}

%
 
%

\title{Enumeration of Self-Dual Cyclic Codes of some Specific Lengths over Finite Fields}

\author{Supawadee  Prugsapitak\thanks{S. Prugsapitak is with the Department of Mathematics and Statistics, Faculty of Science,  Prince of Songkla University, Hat Yai,
    Songkhla 90110, Thailand
    {\tt email: supawadee.p@psu.ac.th}}
~ and Somphong Jitman\thanks{Corresponding Author}~$^,$\thanks{Department of Mathematics, Faculty of Science, Silpakorn University,  Nakhon Pathom 73000, Thailand
{\tt email: sjitman@gmail.com}}}

\maketitle

\begin{abstract}
Self-dual cyclic codes form an  important class of linear codes. It has been shown that there exists a self-dual cyclic code of length $n$ over a finite field if and only if $n$ and the field characteristic are even.   The enumeration of such codes has been given under both the Euclidean and Hermitian products. However, in each case, the formula for self-dual cyclic codes of length $n$ over a finite field contains a characteristic function which is not easily computed.  In this paper, we focus on more efficient ways to enumerate self-dual cyclic codes  of lengths $2^\nu p^r$ and  $2^\nu p^rq^s$,  where $\nu$, $r$, and $s$ are positive integers.   Some number theoretical tools are established.   Based on these results, alternative formulas and efficient algorithms to   determine  the number of self-dual cyclic codes of such  lengths   are provided.
\end{abstract}

\noindent{Keywords: self-dual codes; cyclic codes;   enumeration.}

\noindent{Mathematics Subject Classification: 94B15, 94B05, 12C05}

    \section{Introduction}

Self-dual  cyclic codes  constitute an  important class of linear codes  due to their rich algebraic structures, their fascinating links to other objects, and their wide applications. Such codes have   been extensively studied for both theoretical and practical reasons (see \cite{CDL2014}, \cite{JLX2011}, \cite{JLS2013},   \cite{KZ2008},  \cite{NRS2006}, \cite{S2003} and references therein).
Some major results on Euclidean self-dual cyclic codes have been discussed in \cite{JLX2011} and \cite{KZ2008}. The complete characterization and enumeration of such codes have been established  in \cite{JLX2011}. These results have been generalized to the    Hermitian case  in \cite{JLS2013}.  Some related results on the enumeration of self-dual cyclic codes over finite chain rings can be found in \cite{CLZ2016}.

For a prime power $q$, denote by $\mathbb{F}_q$ the finite field of $q$ elements. A {\em linear code} $C$ of length $n$ over $\mathbb{F}_q$ is defined to be a subspace of the $\mathbb{F}_q$-vector space $\mathbb{F}_q^n$.  The Euclidean dual of a linear code $C$  is defined to be \[C^{\perp_{\rm E}}=\{\boldsymbol{v}\in \mathbb{F}_q^n\mid \langle \boldsymbol{v} , \boldsymbol{c} \rangle_{\rm E}=0\text{ for all } \boldsymbol{c}\in C\},\]
where $\langle \boldsymbol{v} , \boldsymbol{u} \rangle_{\rm E}:=\sum_{i=1}^n v_iu_i$ is the Euclidean inner product between $\boldsymbol{v} =(v_1,v_2,\dots, v_n) $ and  $\boldsymbol{u} =(u_1,u_2,\dots, u_n)$ in $\mathbb{F}_q^n$.
In the case where $q$ is a square, the Hermitian  dual of  a linear code $C$    can be defined as well and it  is defined to be \[C^{\perp_{\rm H}}=\{\boldsymbol{v}\in \mathbb{F}_{q}^n\mid \langle \boldsymbol{v} , \boldsymbol{c} \rangle_{\rm H}=0\text{ for all } \boldsymbol{c}\in C\},\]
where $\langle \boldsymbol{v} , \boldsymbol{u} \rangle_{\rm H}:=\sum_{i=1}^n v_iu_i^{\sqrt q}$ is the  Hermitian inner product between $\boldsymbol{v}   $ and  $\boldsymbol{u}  $ in $\mathbb{F}_{q}^n$.  A code $C$ is said to be {\em Euclidean self-dual}
(resp.  {\em Hermitian self-dual}) if  $C=C^{\perp _{\rm E}}$ (resp., $C=C^{\perp _{\rm H}}$).

A linear code $C$ is said to be {\em  cyclic } if it is invariant under the right cyclic shift.  It is well known that every cyclic code of length $n$ over $\mathbb{F}_q$ can be view as an (isomorphic) ideal in the principal ideal ring $\mathbb{F}_q[x]/\langle x^n-1\rangle$ uniquely generated by a monic divisor $g(x)$  of $x^n-1$. Such the polynomial is called the {\em generator polynomial} of $C$. For a polynomial $f(x)=\sum_{i=0}^kf_ix^i$  of degree $k$ in $\mathbb{F}_{q}[x]$ with $f_0\ne 0$,  the {\em reciprocal polynomial} of $f(x)$ is defined to be $f^*(x):=f_0^{-1} x^k \sum_{i=0}^kf_i(1/x)^i$.  If $q$ is a square, the {\em conjugate reciprocal polynomial} of $f(x)$ is defined to be $f^\dagger (x):=f_0^{-\sqrt{q}} x^k \sum_{i=0}^kf_i^{\sqrt{q}}(1/x)^i$.
In \cite{JLX2011} and \cite{JLS2013}, it has been shown that a cyclic code of length $n$  with the generator polynomial $g(x)$ is Euclidean self-dual (resp., Hermitian self-dual) if and only if $g(x)=h^*(x)$ (resp., $g(x)=h^\dagger(x)$), where $h(x)=\frac{x^n-1}{g(x)}$. Based on this results, the following characterizations for the existence of self-dual cyclic codes were  obtained in  \cite{JLX2011} and \cite{JLS2013}.
\begin{theorem}[{\cite{JLX2011}}]  There exists a Euclidean self-dual cyclic code of length $n$ over $\mathbb{F}_q$ if and only if $n$ and $q$ are even.
\end{theorem}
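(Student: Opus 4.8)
The plan is to argue entirely through the polynomial criterion recalled above: a cyclic code of length $n$ over $\mathbb{F}_q$ is Euclidean self-dual exactly when its monic generator $g(x)\mid x^n-1$ satisfies $g(x)=h^*(x)$, where $h(x)=(x^n-1)/g(x)$. Since any divisor of $x^n-1$ has nonzero constant term, $h^*$ is always defined, and $g=h^*$ is equivalent to the existence of a factorization
\[
x^n-1=h(x)\,h^*(x),\qquad h(0)\neq 0 .
\]
(Indeed, given such an $h$ one recovers a valid self-dual generator as $g=h^*=(x^n-1)/h$, which is monic and divides $x^n-1$.) Thus the theorem reduces to the assertion that this factorization exists if and only if $n$ and $q$ are even, and I would treat the two directions separately.

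For necessity, suppose $x^n-1=h(x)h^*(x)$ and set $p=\operatorname{char}(\mathbb{F}_q)$, writing $n=p^b m$ with $p\nmid m$. Then $x^n-1=(x^m-1)^{p^b}$ with $x^m-1$ separable, so $1$ is a root of $x^n-1$ of multiplicity exactly $p^b$. On the right-hand side I would use that the reciprocal map is multiplicative on polynomials with nonzero constant term, $(f_1f_2)^*=f_1^*f_2^*$, together with $(x-1)^*=x-1$: if $(x-1)^{j}$ is the exact power of $x-1$ dividing $h$, then it is also the exact power dividing $h^*$ (the complementary factor stays coprime to $x-1$ because $h(0)\neq 0$), so $x-1$ divides $h\,h^*$ with the \emph{even} multiplicity $2j$. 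Equating multiplicities gives $p^b=2j$, which forces $p=2$ and $b\geq 1$; that is, $q$ is even and $n$ is even. (A bare degree count, $\deg g=\deg h^*=\deg h=n/2$, already yields the parity of $n$, but the multiplicity argument delivers both conclusions at once.)

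For sufficiency, suppose $n$ and $q$ are even. In characteristic $2$ one has $(x^{n/2}-1)^2=x^{n}-2x^{n/2}+1=x^{n}-1$, so $g(x)=x^{n/2}-1$ is a monic divisor of $x^n-1$ with $h(x)=(x^n-1)/g(x)=x^{n/2}-1$. Since $-1=1$ in $\mathbb{F}_q$, a direct computation gives $(x^{n/2}-1)^*=x^{n/2}-1$, hence $h^*(x)=g(x)$. By the criterion, the cyclic code generated by $x^{n/2}-1$ is Euclidean self-dual, proving the claim.

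The only genuinely load-bearing point is the parity bookkeeping in the necessity step: because $x-1$ is fixed by the reciprocal map, every product $h\,h^*$ is divisible by $x-1$ to an even power, whereas $x^n-1$ is divisible by $x-1$ to the power $p^{v_p(n)}$; reconciling these two parities is precisely what pins down both $p=2$ and $2\mid n$. Everything else—the reduction to the factorization and the explicit generator $x^{n/2}-1$—is routine.
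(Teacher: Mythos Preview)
Your proof is correct. The paper itself does not prove this statement---it is quoted without proof from \cite{JLX2011} as background---so there is no in-paper argument to compare against. For what it is worth, your approach (reducing to a factorization $x^n-1=h\,h^*$, then matching the exact power of $x-1$ on both sides to force $p=2$ and $v_2(n)\geq 1$, and exhibiting $g=x^{n/2}-1$ for the converse) is essentially the standard argument from \cite{JLX2011}.
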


\begin{theorem} [{\cite{JLS2013}}] If $q$ is a square, then there exists an Hermitian  self-dual cyclic code of length $n$ over $\mathbb{F}_{q}$ if and only if $n$ and $q$ are even.
\end{theorem}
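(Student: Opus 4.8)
The plan is to reduce the statement to the polynomial criterion recalled above: a cyclic code of length $n$ with generator polynomial $g(x)$, necessarily a monic divisor of $x^n-1$, is Hermitian self-dual if and only if $g(x)=h^{\dagger}(x)$, where $h(x)=(x^n-1)/g(x)$. Throughout, $\sqrt{q}$ is itself a prime power, and I let $p=\operatorname{char}\mathbb{F}_q$, so that ``$q$ even'' means $p=2$.

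For necessity I would track the single irreducible factor $x-1$ of $x^n-1$. A short calculation straight from the definition gives $(x-1)^{\dagger}=x-1$ irrespective of the parity of $q$: the only sign involved is $(-1)^{-\sqrt{q}}$, which equals $-1$ in either case: when $p$ is odd because $\sqrt{q}$ is then an odd prime power, and when $p=2$ trivially. Since $f\mapsto f^{\dagger}$ is multiplicative on monic polynomials and permutes the monic irreducible divisors of $x^n-1$, writing $m_{x-1}(\cdot)$ for the multiplicity of $x-1$ we obtain
\[
m_{x-1}(g)=m_{x-1}(h^{\dagger})=m_{(x-1)^{\dagger}}(h)=m_{x-1}(h),
\]
so $m_{x-1}(x^n-1)=m_{x-1}(g)+m_{x-1}(h)=2\,m_{x-1}(g)$ is even. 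On the other hand, putting $a=v_p(n)$ one has $x^n-1=(x^{\,n/p^{a}}-1)^{p^{a}}$ with the inner factor separable and $1$ a simple root of it, whence $m_{x-1}(x^n-1)=p^{a}$. Thus $p^{a}$ must be even, forcing $p=2$ and $a\ge 1$; that is, both $q$ and $n$ are even. (The evenness of $n$ also follows from a bare degree count, since $\deg h^{\dagger}=\deg h$ and $\deg g+\deg h=n$ give $n=2\deg g$.)

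For sufficiency, assume $n$ and $q$ are both even and write $n=2^{b}m$ with $m$ odd and $b\ge 1$, so $x^n-1=(x^m-1)^{2^{b}}$ over $\mathbb{F}_q$. I would then simply \emph{exhibit} the generator $g(x):=(x^m-1)^{2^{b-1}}$: it is a monic divisor of $x^n-1$ whose check polynomial is $h(x)=(x^n-1)/g(x)=(x^m-1)^{2^{b-1}}=g(x)$, and a one-line computation in characteristic $2$ gives $(x^m-1)^{\dagger}=x^m-1$, so $g^{\dagger}=\bigl((x^m-1)^{\dagger}\bigr)^{2^{b-1}}=g=h$ by multiplicativity of $\dagger$. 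Hence the cyclic code generated by $g$ is Hermitian self-dual, which completes the proof.

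The only genuinely substantive point --- and the sole place the parity of $q$ intervenes --- is the behaviour of $x-1$ and $x^m-1$ under the conjugate-reciprocal map, which comes down entirely to the sign $(-1)^{\sqrt{q}}$; everything else is routine bookkeeping with degrees and multiplicities, so I do not expect a real obstacle here. The genuinely hard problem, taken up in the remainder of the paper, is to describe the full orbit structure of $f\mapsto f^{\dagger}$ on \emph{all} the irreducible divisors of $x^n-1$, since that is what governs how many such codes there are.
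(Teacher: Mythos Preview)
The paper does not give its own proof of this theorem; it is simply quoted from \cite{JLS2013} as background. Your argument is correct and self-contained: the necessity direction via tracking the multiplicity of the $\dagger$-fixed factor $x-1$ in $x^n-1$ (forcing $p^{v_p(n)}$ to be even) is clean, and the sufficiency direction by exhibiting $g(x)=(x^m-1)^{2^{b-1}}$ is the standard explicit construction. There is nothing in this paper to compare against, so your proof stands on its own merits.
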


From the above characterizations, it suffices to study  Euclidean (resp., Hermitian) self-dual cyclic codes  of even length  $n=n^\prime 2^\nu$ over   $\mathbb{F}_{2^l} $ (resp., $\mathbb{F}_{2^{2l}} $), where $n^\prime$ is odd and $\nu$ is a positive integer.

The following functions are keys for determining the number of  self-dual cyclic codes. Let $\mathbb{O}$ denote the set of odd positive integers. For each positive integer $l$, let $\chi_l:\mathbb{O}\to \{0,1\}$ and $\lambda_l:\mathbb{O} \to \{0,1\}$ be functions defined by
\begin{align}
\label{chi}\chi_l (j) =
\begin{cases}
0 &\text{ if  there exists an integer } s\geq 1 \text{ such that }  j|(2^{ls}+1),\\
1 &\text{ otherwise,}
\end{cases}
\end{align}
and
\begin{align}
\label{lambda}\lambda_l (j)=
\begin{cases}
0&\text{ if  there exists an odd integer } s\geq 1 \text{ such that }  j|(2^{ls}+1),\\
1&\text{otherwise}.
\end{cases}
\end{align}

For coprime positive integers $i$ and $j$, let $\operatorname{ord}_{j}(i)$ denote the multiplicative order of $i$ modulo $j$.

The formulas for the number of Euclidean self-dual cyclic codes of length $n$ over $\mathbb{F}_{2^l}$ and the number of Hermitian self-dual cyclic codes of length $n$ over $\mathbb{F}_{2^{2l}}$  were given in \cite{JLX2011} and \cite{JLS2013} as follows.

\begin{theorem} [{\cite[Theorem 3]{JLX2011}}] \label{ESD-form}
    Let $l$ be a positive integer and let  $n=n^\prime 2^\nu$ be a positive integer such that $n^\prime \geq 1$ is odd and $\nu \geq 1$. Then the number of Euclidean  self-dual cyclic codes of length $n$ over $\mathbb{F}_{2^{l}}$ is
    \begin{align}
    (2^\nu+1)^{\frac{1}{2}\sum_{d| n^\prime}\chi_l(d)\frac{\phi(d)}{ {\rm ord}_{d}(2^{l})}}.\label{ESDcc}
    \end{align}
\end{theorem}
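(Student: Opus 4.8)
The plan is to count the monic divisors $g(x)$ of $x^n-1$ over $\mathbb{F}_{2^l}$ satisfying the self-duality criterion $g(x)=h^*(x)$ with $h(x)=(x^n-1)/g(x)$ recalled in the excerpt. Since $\mathbb{F}_{2^l}$ has characteristic $2$ and $n=n'2^\nu$, we have $x^n-1=(x^{n'}-1)^{2^\nu}$, and $x^{n'}-1$ is separable because $\gcd(n',2)=1$. Write $x^{n'}-1=\prod_{i=1}^{t}f_i(x)$ as a product of distinct monic irreducible polynomials in $\mathbb{F}_{2^l}[x]$; none of them is $x$, so each reciprocal $f_i^*$ is again a monic irreducible divisor of $x^{n'}-1$, and $f\mapsto f^*$ is an involution on $\{f_1,\dots,f_t\}$. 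By unique factorization, every monic divisor of $x^n-1$ is $g(x)=\prod_{i=1}^t f_i(x)^{e_i}$ for a unique tuple $(e_1,\dots,e_t)$ with $0\le e_i\le 2^\nu$, and then $h(x)=\prod_i f_i(x)^{2^\nu-e_i}$ and $h^*(x)=\prod_i f_i^*(x)^{2^\nu-e_i}$.

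First I would translate $g=h^*$ into conditions on the exponents by reading off the multiplicity of each $f_i$ on both sides. If $f_i^*=f_i$, the condition is $e_i=2^\nu-e_i$, forcing the single value $e_i=2^{\nu-1}$ (an integer since $\nu\ge1$). If $f_i^*=f_j$ with $j\ne i$, the condition is $e_i=2^\nu-e_j$, together with its mirror $e_j=2^\nu-e_i$, which is the same relation; hence $e_i$ may be chosen freely in $\{0,1,\dots,2^\nu\}$ and $e_j$ is then determined. Consequently the self-dual divisors are in bijection with the collection of free choices, one per orbit $\{f_i,f_j\}$ of size $2$ under the involution, giving exactly $(2^\nu+1)^{N}$ codes, where $N$ is the number of $2$-element orbits, i.e.\ the number of reciprocal pairs among the irreducible factors of $x^{n'}-1$.

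It then remains to show $N=\tfrac12\sum_{d\mid n'}\chi_l(d)\,\phi(d)/\operatorname{ord}_d(2^l)$. I would use the correspondence between irreducible factors of $x^{n'}-1$ over $\mathbb{F}_{2^l}$ and $2^l$-cyclotomic cosets modulo $n'$: grouping the $n'$-th roots of unity by their order $d\mid n'$, the $\phi(d)$ primitive $d$-th roots split into exactly $\phi(d)/\operatorname{ord}_d(2^l)$ Frobenius orbits, each of size $\operatorname{ord}_d(2^l)$, giving that many irreducible factors whose roots have order $d$. Since the reciprocal of the minimal polynomial of a root $\zeta$ is the minimal polynomial of $\zeta^{-1}$, such a factor is self-reciprocal iff $\zeta^{-1}$ lies in the orbit of $\zeta$, i.e.\ iff $-1\equiv 2^{ls}\pmod d$ for some $s\ge1$, i.e.\ iff $d\mid 2^{ls}+1$ for some $s\ge1$ — precisely $\chi_l(d)=0$. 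Thus when $\chi_l(d)=0$ all these factors are self-reciprocal and contribute $0$ to $N$; when $\chi_l(d)=1$ none of them is self-reciprocal, so the fixed-point-free involution $f\mapsto f^*$ partitions them into $\tfrac12\,\phi(d)/\operatorname{ord}_d(2^l)$ reciprocal pairs (in particular that count is even). Summing over $d\mid n'$ yields the claimed exponent.

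The main obstacle is the exponent bookkeeping in the second step: correctly matching the multiplicity of each $f_i$ in $g$ against its multiplicity in $h^*$, and recognizing that self-reciprocal factors are \emph{rigid} (forced multiplicity $2^{\nu-1}$) while a reciprocal pair contributes exactly one free parameter with $2^\nu+1$ admissible values. The remaining work — the translation of ``self-reciprocal irreducible factor'' into the divisibility condition defining $\chi_l$, together with the parity of $\phi(d)/\operatorname{ord}_d(2^l)$ when $\chi_l(d)=1$ that makes the formula integer-valued — is standard but essential.
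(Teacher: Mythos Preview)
Your argument is correct and is essentially the standard proof (the one given in \cite{JLX2011}): factor $x^{n'}-1$ into irreducibles, observe that the self-duality condition $g=h^*$ forces the exponent $2^{\nu-1}$ on each self-reciprocal factor while leaving a free choice in $\{0,\dots,2^\nu\}$ on each reciprocal pair, and then count reciprocal pairs via cyclotomic cosets and the definition of $\chi_l$. Note, however, that the present paper does not supply its own proof of this statement at all---Theorem~\ref{ESD-form} is quoted verbatim from \cite[Theorem~3]{JLX2011} as background, so there is no in-paper argument to compare against; your write-up simply fills in what the paper takes for granted.
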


\begin{theorem} [{\cite[Corollary 3.7]{JLS2013}}]  \label{HSD-form}
    Let $l$ be a positive integer and let $n=n^\prime 2^\nu$ be a positive integer such that $n^\prime \geq 1$ is odd and $\nu \geq 1$. Then the number of Hermitian self-dual cyclic codes of length $n$ over $\mathbb{F}_{2^{2l}}$ is
    \begin{align}
    (2^\nu+1)^{\frac{1}{2}\sum_{d| n^\prime}\lambda_l(d)\frac{\phi(d)}{ {\rm ord}_{d}(2^{2l})}}.\label{HSDcc}
    \end{align}
\end{theorem}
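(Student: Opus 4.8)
The plan is to mirror the proof of Theorem~\ref{ESD-form}, replacing the reciprocal map $*$ by the conjugate reciprocal map $\dagger$; the arithmetic gap between $\chi_l$ and $\lambda_l$ will come out of an \emph{odd}-exponent constraint produced by the extra conjugation in $\dagger$. Throughout write $q=2^{2l}$ and $Q=\sqrt q=2^l$, and use the standard facts that over $\mathbb{F}_q$ one has $x^n-1=(x^{n^\prime}-1)^{2^\nu}$ and that $x^{n^\prime}-1=\prod_i f_i(x)$ is a product of pairwise distinct monic irreducible factors indexed by the $q$-cyclotomic cosets modulo $n^\prime$, the factors attached to a divisor $d\mid n^\prime$ (i.e.\ whose roots have order $d$) numbering $\phi(d)/\operatorname{ord}_d(q)$.

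First I would record the reformulation: since $\dagger$ is a multiplicative involution on the monic divisors of $x^n-1$ (one checks $(x^n-1)^\dagger=x^n-1$ and $(f^\dagger)^\dagger=f$), and $h=(x^n-1)/g$, the condition $g=h^\dagger$ from the quoted characterization is equivalent to $g\,g^\dagger=x^n-1$. Now $\dagger$ permutes the $f_i$ through an involution $\sigma$ defined by $f_i^\dagger=f_{\sigma(i)}$; writing $g=\prod_i f_i^{a_i}$ with $0\le a_i\le 2^\nu$, one gets $g\,g^\dagger=\prod_i f_i^{\,a_i+a_{\sigma(i)}}$, so $g\,g^\dagger=x^n-1$ holds iff $a_i+a_{\sigma(i)}=2^\nu$ for every $i$. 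Each fixed point of $\sigma$ then forces $a_i=2^{\nu-1}$ (this is the only place the hypothesis $\nu\ge 1$ enters), while each $2$-cycle $\{i,\sigma(i)\}$ contributes exactly $2^\nu+1$ free choices ($a_i\in\{0,1,\dots,2^\nu\}$ arbitrary, $a_{\sigma(i)}$ then determined). Since distinct admissible tuples $(a_i)$ give distinct generator polynomials and hence distinct codes, the number of Hermitian self-dual cyclic codes of length $n$ over $\mathbb{F}_{2^{2l}}$ equals $(2^\nu+1)^{P}$, where $P$ is the number of $2$-cycles of $\sigma$, i.e.\ the number of conjugate reciprocal pairs among the irreducible factors of $x^{n^\prime}-1$.

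It remains to show $P=\tfrac12\sum_{d\mid n^\prime}\lambda_l(d)\,\phi(d)/\operatorname{ord}_d(q)$, which I would get from the key lemma: \emph{for an irreducible factor $f$ of $x^{n^\prime}-1$ whose roots have order $d$, the equality $f^\dagger=f$ holds if and only if $\lambda_l(d)=0$}; in particular it depends only on $d$. To prove this, take a primitive $d$-th root of unity $\omega$ and suppose $f$ corresponds to the coset $\{a\,q^{j}\pmod d : j\ge 0\}$ with $\gcd(a,d)=1$. Taking the reciprocal inverts the exponents and applying Frobenius$^{\,Q}$ to the coefficients multiplies them by $Q$, so $f^\dagger$ corresponds to $\{-a\,Q\,q^{j}\pmod d\}$; since $\gcd(aQ,d)=1$ (as $d$ is odd), this block is again a union of factors of order $d$, so $\dagger$ restricts to an involution on the order-$d$ factors. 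Now $f^\dagger=f$ iff $\{q^{j}\}=\{-Q\,q^{j}\}$ as subsets of $\mathbb{Z}/d\mathbb{Z}$, iff $-Q\equiv q^{t}\pmod d$ for some integer $t$; writing $q=Q^2$ this is $Q^{2t-1}\equiv-1\pmod d$, and replacing $2t-1$ by its absolute value (using $(-1)^{-1}=-1$) this is equivalent to the existence of an odd $s\ge 1$ with $d\mid 2^{ls}+1$, i.e.\ $\lambda_l(d)=0$. Consequently the order-$d$ block consists entirely of fixed points of $\sigma$ when $\lambda_l(d)=0$ and is fixed-point-free when $\lambda_l(d)=1$, contributing $\tfrac12\,\phi(d)/\operatorname{ord}_d(q)$ pairs in the latter case; summing over $d\mid n^\prime$ yields $P=\tfrac12\sum_{d\mid n^\prime}\lambda_l(d)\,\phi(d)/\operatorname{ord}_d(2^{2l})$, which is the exponent in~\eqref{HSDcc}.

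The main obstacle is the key lemma: carrying the exponent bookkeeping through the conjugate reciprocal map carefully enough to see the \emph{odd} power $s$ appear (this is precisely what separates $\lambda_l$ from $\chi_l$; the Euclidean computation of Theorem~\ref{ESD-form} gives the unrestricted condition $d\mid 2^{lt}+1$ instead), and verifying that the self-conjugate-reciprocal/non-self-conjugate-reciprocal dichotomy really is constant on each order-$d$ block rather than coset-dependent. Once the factorization of $x^n-1$ and the cyclotomic-coset description of its irreducible factors are in hand, the passage to $g\,g^\dagger=x^n-1$, the per-pair count of $2^\nu+1$, and the summation over divisors are all routine.
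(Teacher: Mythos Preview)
The paper does not prove this theorem at all: it is quoted verbatim as \cite[Corollary~3.7]{JLS2013} and used as an input to the rest of the article, so there is no in-paper argument to compare against. Your proof is correct and is essentially the standard derivation one would expect to find in the cited source: pass from $g=h^\dagger$ to $g\,g^\dagger=x^n-1$, read off the exponent condition $a_i+a_{\sigma(i)}=2^\nu$ on the irreducible factors, count $2^\nu+1$ choices per $\dagger$-orbit of size two, and identify the self-$\dagger$ factors via the cyclotomic-coset criterion $-Q\in\langle q\rangle\pmod d$, which unwinds to the odd-exponent condition defining $\lambda_l$. The one point worth tightening in a final write-up is the root computation for $f^\dagger$: you implicitly use that conjugating coefficients by $c\mapsto c^{Q}$ sends the root set $\{\alpha_k\}$ to $\{\alpha_k^{Q}\}$ (apply the $Q$-Frobenius to the splitting $f=\prod(x-\alpha_k)$), whence $f^\dagger$ has roots $\alpha_k^{-Q}$ and the coset becomes $-aQ\cdot\langle q\rangle$ as you claim; stating this explicitly removes any ambiguity about which direction the Frobenius acts on roots.
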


From the above theorems, the difficulty is to compute
\begin{align}\label{eq:t} t(n',l) := \frac{1}{2} \sum_{d | n'} \chi_l(d)\frac{\phi(d)}{\operatorname{ord}_{d}(2^l)}\end{align}
and
\begin{align}\label{eq:tau} \tau (n',l) := \frac{1}{2} \sum_{d | n'} \lambda_l(d)\frac{\phi(d)}{\operatorname{ord}_{d}(2^{2l})}.\end{align}
We can see that  the formulas contain the functions $\operatorname{ord}_{d}$, $\chi_l$ and $\lambda_l$ which are possible but not easy to determined. In this paper, we focus on Euclidean and Hermitian self-dual cyclic codes of  some specific lengths and aim to reduce the complexity in  computing $ t(n',l) $ and $\tau (n',l) $ in \eqref{eq:t} and \eqref{eq:tau}, respectively.  Precisely, we focus on the enumeration of self-dual cyclic codes of length $2^\nu p^r$ and  $2^\nu p^rq^s$ with respect to the Euclidean and Hermitian inner products, where $p$ and $q$ are distinct odd primes and $\nu$, $r$, and $s$ are positive integers.

After this introduction,  some number theoretical tools and an efficient algorithm to determine the  number of Euclidean  self-dual cyclic codes of length $2^\nu p^r$ and  $2^\nu p^rq^s$  over $\mathbb{F}_{2^l}$ are given in Section 2.    The analogous results for the   Hermitian case are given  in Section 3.


\section{Euclidean Self-Dual Cyclic Codes }

In this section,  number theoretical tools and efficient algorithms  for determining  the formula for Euclidean self-dual cyclic codes of length $2^\nu p^r$ and $2^\nu p^r q^s$ over $\mathbb{F}_{2^l}$ in Theorem  \ref{ESD-form}  are  given.  From  Theorem  \ref{ESD-form}, it is sufficient to focus on the value of  $t (n',l)$ in \eqref{eq:t}, where $ n^\prime\in \{p^r,p^rq^s\}$.

\subsection{Number Theoretical Results}
In order to give an efficient way to compute the number of  Euclidean self-dual  cyclic codes,  we begin with  the following number theoretical results.

For a prime $p$ and  integers $i\geq 0$ and  $j\geq 1$, we say that $p^i$ exactly divides $j$, denoted by $p^i || j$, if $p^i$ divides $j$ but $p^{i+1}$ does not divide $j$.

\begin{lemma}\label{lem2.1} Let $p$ be an odd prime and let $l$ be a positive integer. Let $\gamma$ and $i$ be  the  integers such that $2^\gamma || \operatorname{ord}_{p}(2) $ and $2^i||l$.  Then one of the following statements holds.
    \begin{enumerate}
        \item   $i<\gamma$ if and only if  $2^{\gamma-i}|| \operatorname{ord}_{p}(2^l)$.
        \item  $i\geq \gamma$ if and only if $\operatorname{ord}_{p}(2^l)$ is odd.
    \end{enumerate}
\end{lemma}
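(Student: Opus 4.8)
The plan is to reduce everything to the standard formula for the order of a power of a group element: if $2$ has multiplicative order $d := \operatorname{ord}_{p}(2)$ modulo $p$, then $\operatorname{ord}_{p}(2^l) = d/\gcd(d,l)$. I would first recall (or give a one-line proof of) this identity, since it is the engine of the whole argument, and then simply analyze the $2$-adic valuation of the right-hand side.

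Following the notation of the statement, write $d = 2^\gamma m$ with $m$ odd and $l = 2^i t$ with $t$ odd. Then $\gcd(d,l) = 2^{\min(\gamma,i)}\gcd(m,t)$, so
\[ \operatorname{ord}_{p}(2^l) = \frac{2^\gamma m}{2^{\min(\gamma,i)}\gcd(m,t)} = 2^{\gamma-\min(\gamma,i)}\cdot\frac{m}{\gcd(m,t)}, \]
and since $m$ is odd the factor $m/\gcd(m,t)$ is odd. Hence the exact power of $2$ dividing $\operatorname{ord}_{p}(2^l)$ is $2^{\gamma-\min(\gamma,i)}$; that is, $2^{\gamma-\min(\gamma,i)} \,\|\, \operatorname{ord}_{p}(2^l)$.

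Now split into the two exhaustive, mutually exclusive cases. If $i < \gamma$, then $\min(\gamma,i) = i$, which gives $2^{\gamma-i} \,\|\, \operatorname{ord}_{p}(2^l)$ with $\gamma - i \geq 1$; in particular $\operatorname{ord}_{p}(2^l)$ is even. If $i \geq \gamma$, then $\min(\gamma,i) = \gamma$, so the exact power of $2$ dividing $\operatorname{ord}_{p}(2^l)$ is $2^0 = 1$, i.e. $\operatorname{ord}_{p}(2^l)$ is odd. This proves the two forward implications in (1) and (2). The converses are then automatic: the hypotheses $i < \gamma$ and $i \geq \gamma$ partition all possibilities, and so do the conclusions ``$\operatorname{ord}_{p}(2^l)$ is even'' (equivalently $2^{\gamma-i} \,\|\, \operatorname{ord}_{p}(2^l)$ with $\gamma-i\ge 1$) and ``$\operatorname{ord}_{p}(2^l)$ is odd''.

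I do not expect a genuine obstacle here; the proof is essentially a valuation computation. The only points requiring mild care are the bookkeeping with $\min(\gamma,i)$ and the observation that cancelling a common odd factor leaves the $2$-adic valuation unchanged. I would also make sure to state the order-of-a-power identity explicitly (with a reference or a brief proof), so that the valuation argument stands on firm ground.
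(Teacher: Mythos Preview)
Your proof is correct and follows essentially the same approach as the paper: both invoke the order-of-a-power identity $\operatorname{ord}_{p}(2)=\gcd(\operatorname{ord}_{p}(2),l)\cdot\operatorname{ord}_{p}(2^l)$ and then read off the $2$-adic valuation of $\operatorname{ord}_{p}(2^l)$ as $\gamma-\min(\gamma,i)$. Your write-up is somewhat more explicit in isolating the odd cofactor $m/\gcd(m,t)$, but the argument is the same.
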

\begin{proof}
    We first recall that
    \begin{align}\label{order}\operatorname{ord}_{p}(2) = \operatorname{gcd}(\operatorname{ord}_{p}(2),l) \operatorname{ord}_p(2^l).\end{align}
    Let $j$ be a nonnegative integer such that $2^{j} ||  \operatorname{ord}_p(2^l)$.
    By considering the highest power of two that appear in (\ref{order}), we have $j = \gamma -\operatorname{min}(i,\gamma)$. This completes the proof.
\end{proof}
\begin{corollary}\label{lem:odd-gamma}
    Let $p$  be an odd prime and let $l$ be a positive integer.   Let     $\gamma$ be  the    integer such that $2^\gamma || \operatorname{ord}_{p}(2)$.  Then the following statements holds.
    \begin{enumerate}
        \item  $\operatorname{ord}_{p}(2^l)$ is odd if and only if    $2^\gamma | l$.
        \item If $\gamma\geq 1$, then $2||\operatorname{ord}_{p}(2^l)$ if and only if      $2^{\gamma-1}||l$.
        \item   If $\gamma\geq 1$, then  $4|\operatorname{ord}_{p}(2^l)$ if and only if    $2^{\gamma-1}\nmid l$.
    \end{enumerate}
\end{corollary}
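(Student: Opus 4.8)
The plan is to deduce Corollary~\ref{lem:odd-gamma} directly from Lemma~\ref{lem2.1}. Throughout, write $i$ for the integer with $2^i \| l$, so that $i \geq \gamma$ is equivalent to $2^\gamma \mid l$, and $2^{\gamma - 1} \| l$ is equivalent to $i = \gamma - 1$, while $2^{\gamma-1} \nmid l$ is equivalent to $i \leq \gamma - 2$ (assuming $\gamma \geq 1$). With this translation in place, each of the three statements becomes a statement purely about the relation between $i$ and $\gamma$, which is exactly what Lemma~\ref{lem2.1} controls.

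For part~(1): by Lemma~\ref{lem2.1}(2), $\operatorname{ord}_p(2^l)$ is odd if and only if $i \geq \gamma$, and $i \geq \gamma$ holds precisely when $2^\gamma \mid l$. Note that the degenerate case $\gamma = 0$ (i.e. $\operatorname{ord}_p(2)$ odd) is handled automatically: then $2^\gamma = 1 \mid l$ always, and indeed $\operatorname{ord}_p(2^l)$ divides $\operatorname{ord}_p(2)$ by \eqref{order}, hence is odd. For part~(2), assume $\gamma \geq 1$. Then $2 \| \operatorname{ord}_p(2^l)$ means the exponent of $2$ in $\operatorname{ord}_p(2^l)$ equals $1$; by Lemma~\ref{lem2.1}(1) this forces $i < \gamma$ and $\gamma - i = 1$, i.e. $i = \gamma - 1$, i.e. $2^{\gamma-1} \| l$; conversely if $i = \gamma - 1 < \gamma$, Lemma~\ref{lem2.1}(1) gives $2^{\gamma - i} = 2^1 \| \operatorname{ord}_p(2^l)$. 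For part~(3), again with $\gamma \geq 1$: $4 \mid \operatorname{ord}_p(2^l)$ means the $2$-adic valuation of $\operatorname{ord}_p(2^l)$ is at least $2$; combining parts~(1) and~(2), this is exactly the negation of ``$\operatorname{ord}_p(2^l)$ odd or $2 \| \operatorname{ord}_p(2^l)$'', which by those parts is the negation of ``$2^\gamma \mid l$ or $2^{\gamma-1}\|l$'', namely $i \leq \gamma - 2$, i.e. $2^{\gamma-1}\nmid l$. Alternatively one can read~(3) straight off Lemma~\ref{lem2.1}(1): $4 \mid \operatorname{ord}_p(2^l)$ iff $\gamma - i \geq 2$ iff $i \leq \gamma - 2$ iff $2^{\gamma-1} \nmid l$.

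There is no real obstacle here; the corollary is a routine unpacking of the lemma, and the only thing to be careful about is the bookkeeping of the boundary cases (especially $\gamma = 0$ in part~(1), and making sure the hypothesis $\gamma \geq 1$ in parts~(2) and~(3) is used so that ``$2^{\gamma-1}$'' makes sense and the valuation statements ``$2\|$'' and ``$4\mid$'' are actually attainable). I would state the $i$-versus-$\gamma$ translation once at the start and then dispatch the three items in two or three lines each.
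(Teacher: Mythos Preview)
Your proposal is correct and follows essentially the same route as the paper's own proof: both introduce $i$ with $2^i \| l$ and read each item off Lemma~\ref{lem2.1} via the translation between $i$ and $\gamma$. The paper is slightly terser (it does not pause on the $\gamma=0$ case and argues part~(3) directly rather than as the complement of~(1) and~(2), which you also give as your alternative), but the content is the same.
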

\begin{proof}  Let $i$ be an integer such that $2^i||l$. Then the first part is immediately deduced from Lemma \ref{lem2.1} since $2^i ||l$.

    If $2^{\gamma-1} || l$, then  $i = \gamma-1 < \gamma.$ Again by Lemma \ref{lem2.1}, $2 || \operatorname{ord}_p(2^l)$.
    Conversely, if $2 || \operatorname{ord}_p(2^l)$ which implies that $\operatorname{ord}_p(2^l)$ is even, then by Lemma \ref{lem2.1} we deduced
    that $i < \gamma$ and hence $\gamma-i = 1$. Thus $2^{\gamma-1} || l.$ The second part is proved.
    
    If $2^{\gamma-1} \nmid l $, then $\gamma-1 > i$ which means $\gamma - i \geq 2$ and $\gamma > i$. Thus by  Lemma \ref{lem2.1}, $2^2 | \operatorname{ord}_p(2^l)$.
    Conversely, if $4 | \operatorname{ord}_p{2^l}$, then $\gamma-i \geq 2$ and this implies that  $2^{\gamma-1} \nmid l$. This completes the third part.
\end{proof}

%

For an odd integer $d>1$, necessary and sufficient conditions for  $\chi_l(d)$ to be zero  were  determined in  \cite{M1997}  in terms of $\operatorname{ord}_{p}(2^l)$, where $p$ is a prime divisor of $d$.
\begin{lemma}
    [{\cite[Theorem 1]{M1997}}]\label{goodP} Let $d>1$ be an odd integer and let $l$ be a positive integer. Then $\chi_l(d)=0$   if and only if there exists $e\geq 1$ such that $2^e|| {\rm ord}_{p}(2^l)$ for every prime $p$ dividing $d$.
\end{lemma}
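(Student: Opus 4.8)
The plan is to convert the defining condition ``$\chi_l(d)=0$'' --- the existence of $s\ge 1$ with $d\mid 2^{ls}+1$ --- into a statement about the $2$-adic valuations of the multiplicative orders $\operatorname{ord}_{p}(2^l)$ for primes $p\mid d$, and then to read off exactly when a \emph{single} exponent $s$ can satisfy all the local conditions at once. Write $a:=2^l$ and factor $d=p_1^{e_1}\cdots p_k^{e_k}$ into distinct odd primes. By the Chinese Remainder Theorem, $a^s\equiv -1\pmod{d}$ holds iff $a^s\equiv -1\pmod{p_i^{e_i}}$ for every $i$, so $\chi_l(d)=0$ iff some $s\ge1$ solves all $k$ of these congruences simultaneously.

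Next I would analyse each congruence separately. For an odd prime $p$ and $e\ge1$, the group $(\mathbb{Z}/p^{e}\mathbb{Z})^{*}$ is cyclic of even order, so $-1$ is its unique element of order $2$; hence $a^s\equiv -1\pmod{p^e}$ iff $a^s$ has order exactly $2$ there, i.e. iff $M\mid 2s$ and $M\nmid s$, where $M:=\operatorname{ord}_{p^{e}}(a)$. Writing $M=2^{c}u$ with $u$ odd, an elementary check with $2$-adic valuations shows this pair of conditions is equivalent to the three conditions $c\ge1$, $v_2(s)=c-1$, and $u\mid s$ (here $v_2$ is the $2$-adic valuation). Finally, since the reduction $(\mathbb{Z}/p^{e}\mathbb{Z})^{*}\to(\mathbb{Z}/p\mathbb{Z})^{*}$ is onto with kernel a $p$-group, one has $\operatorname{ord}_{p^{e}}(a)=p^{j}\operatorname{ord}_{p}(a)$ for some $j\ge0$, and because $p$ is odd this gives $c=v_2(M)=v_2(\operatorname{ord}_{p}(2^l))$; in particular $c$ depends only on $p$, not on $e$.

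Combining these, $\chi_l(d)=0$ iff there is $s\ge1$ with $c_i\ge1$, $v_2(s)=c_i-1$ and $u_i\mid s$ for all $i$, where $c_i:=v_2(\operatorname{ord}_{p_i}(2^l))$ and $u_i$ is the odd part of $\operatorname{ord}_{p_i^{e_i}}(2^l)$. The requirements $v_2(s)=c_i-1$ for every $i$ force $c_1=\cdots=c_k$; conversely, if all $c_i$ equal a common value $e\ge1$, then $s:=2^{\,e-1}\operatorname{lcm}(u_1,\dots,u_k)$ has $v_2(s)=e-1$ and is divisible by each $u_i$, so it meets every requirement. Therefore $\chi_l(d)=0$ iff there exists $e\ge1$ such that $2^{e}\,\|\,\operatorname{ord}_{p}(2^l)$ for every prime $p\mid d$, which is the assertion.

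The only genuinely delicate point is the passage in the middle step from the prime-power moduli $p_i^{e_i}$ occurring in $d$ down to the primes $p_i$: this rests on the cyclicity of $(\mathbb{Z}/p^{e}\mathbb{Z})^{*}$ together with the fact that multiplying an order by a power of the odd prime $p$ leaves its $2$-adic valuation unchanged. Everything else is routine valuation bookkeeping, plus the structural observation that one and the same $s$ must serve every prime dividing $d$ --- which is precisely what forces the common exponent $e$ and excludes the case where the orders $\operatorname{ord}_{p}(2^l)$ have differing $2$-adic valuations.
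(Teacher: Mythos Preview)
Your proof is correct. The paper does not supply its own argument for this lemma; it simply quotes it as Theorem~1 of Moree \cite{M1997}, so there is no in-paper proof to compare against. Your route---CRT to localize at prime powers, then the observation that $a^s\equiv -1\pmod{p^e}$ forces $v_2(s)=v_2(\operatorname{ord}_{p^e}(a))-1$, together with the fact that $\operatorname{ord}_{p^e}(a)/\operatorname{ord}_{p}(a)$ is a power of the odd prime $p$ so that the $2$-adic valuation is already determined modulo $p$---is exactly the standard argument and is cleanly executed. The construction $s=2^{e-1}\operatorname{lcm}(u_1,\dots,u_k)$ for the converse direction is correct since the $u_i$ are odd.
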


{Generally, for any positive integer  $l$ and any odd integer $d>1$, the value of $\chi_l(d)$ can be obtained by considering a parity of
    $\operatorname{ord}_{p}(2^l)$ for each prime divisor  $p$ of $d$. It is easy to see that the following corollary holds.}

\begin{corollary}\label{char:good} Let $p$  be an odd prime and let $l$ be a positive integer. Then the following statements hold.
    \begin{enumerate}
        \item $\chi_l(p)=1$ if and only if $\operatorname{ord}_{p}(2^l)$ is odd.
        \item $\chi_l(p)=0$ if and only if $\operatorname{ord}_{p}(2^l)$ is even.
        \item $\chi_l(p^i)=\chi_l(p)$   for all positive integers $i$.
    \end{enumerate}
\end{corollary}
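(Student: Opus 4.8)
The plan is to obtain all three parts as direct consequences of Lemma~\ref{goodP}, whose hypothesis involves the \emph{exact} power of two dividing $\operatorname{ord}_p(2^l)$; the only thing one needs to notice is that, for a single prime divisor, this collapses to a simple parity condition.

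First I would settle part~(2). Taking $d = p$ in Lemma~\ref{goodP}, whose only prime divisor is $p$ itself, gives that $\chi_l(p) = 0$ if and only if there exists $e \geq 1$ with $2^e\, ||\, \operatorname{ord}_p(2^l)$. For any positive integer $m$ the $2$-adic valuation $v_2(m)$ is the unique $e \geq 0$ with $2^e\, ||\, m$, and $v_2(m) \geq 1$ precisely when $m$ is even; hence ``there exists $e \geq 1$ with $2^e\, ||\, \operatorname{ord}_p(2^l)$'' is equivalent to ``$\operatorname{ord}_p(2^l)$ is even''. This proves part~(2). Part~(1) then follows at once because $\chi_l$ is $\{0,1\}$-valued: $\chi_l(p) = 1 \iff \chi_l(p) \neq 0 \iff \operatorname{ord}_p(2^l)$ is not even $\iff \operatorname{ord}_p(2^l)$ is odd.

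For part~(3), fix a positive integer $i$. Since $p$ is an odd prime, $p^i$ is an odd integer with $p^i \geq p > 1$, so Lemma~\ref{goodP} applies to $d = p^i$. But the set of prime divisors of $p^i$ is exactly $\{p\}$, so the condition characterizing $\chi_l(p^i) = 0$ --- ``there exists $e \geq 1$ with $2^e\, ||\, \operatorname{ord}_p(2^l)$'' --- is word for word the same condition that characterizes $\chi_l(p) = 0$. Therefore $\chi_l(p^i) = 0 \iff \chi_l(p) = 0$, and since both values lie in $\{0,1\}$ we conclude $\chi_l(p^i) = \chi_l(p)$.

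I do not expect any genuine difficulty here: the statement is essentially a reformulation of Lemma~\ref{goodP} in the special case $d = p^i$, and the only (trivial) points to verify are that the quantifier ``$\exists\, e \geq 1$'' over exact powers of two amounts to plain evenness of $\operatorname{ord}_p(2^l)$, and that a prime power shares its prime divisors with the underlying prime.
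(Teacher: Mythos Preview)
Your proof is correct and follows exactly the approach the paper intends: the paper merely remarks that the corollary ``is easy to see'' from Lemma~\ref{goodP}, and you have spelled out precisely those details --- specializing $d$ to $p$ and to $p^i$, and noting that the condition ``$\exists\,e\ge 1$ with $2^e\,\|\,\operatorname{ord}_p(2^l)$'' is equivalent to evenness of $\operatorname{ord}_p(2^l)$.
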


The values $\operatorname{ord}_{p^i}(2^l)$ for all $1\leq i\leq r$ play a vital  role in determining $t(p^r,l)$. Here, we  simplify $\operatorname{ord}_{p^i}(2^l)$   in terms of $\operatorname{ord}_{p}(2^l)$.
\begin{lemma}\label{maxalpha}
    Let $p$ be an odd prime  and let $l$ and $i$ be  positive integers.  If  $\alpha$  is  the largest  integer such that $p\nmid  \operatorname{ord}_{p^\alpha}(2^l)$, then \[ \operatorname{ord}_{p^i}(2^l)=
    \begin{cases}
    \operatorname{ord}_{p}(2^l) &\text{ if } i\leq \alpha,   \\
    p^{i-\alpha}\operatorname{ord}_{p}(2^l)    &\text{ if } \alpha< i.
    \end{cases}
    \]
    In particular, if $2$ is a primitive root modulo $p^2$, then
    \[ \operatorname{ord}_{p^i}(2^l)=    \frac{p^{i-1}(p-1) }{\gcd(p^{i-1}(p-1),l)}
    \]
    for all positive integers $i$.
\end{lemma}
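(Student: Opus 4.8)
The plan is to lift the exponent: study how the $p$-adic valuation of $2^{lk}-1$ grows with $k$, and translate this into a statement about $\operatorname{ord}_{p^i}(2^l)$. Write $m = \operatorname{ord}_{p}(2^l)$, so $2^{lm} \equiv 1 \pmod p$ but $2^{lm} \not\equiv 1 \pmod{p^{\,?}}$ a priori only modulo $p$. The key classical fact (a form of the Lifting-the-Exponent lemma, valid since $p$ is odd) is that if $p^{a} \,\|\, 2^{lm}-1$ with $a\ge 1$, then $p^{a+t}\,\|\, (2^{lm})^{p^{t}}-1 = 2^{lm p^{t}}-1$ for every $t\ge 0$. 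Let $\alpha$ be the largest integer with $p \nmid \operatorname{ord}_{p^{\alpha}}(2^l)$; equivalently, $\alpha$ is exactly the integer $a$ in $p^{a}\,\|\,2^{lm}-1$ (one should check this equivalence: for $i\le a$ we have $p^i \mid 2^{lm}-1$ so $\operatorname{ord}_{p^i}(2^l)\mid m$, and since $\operatorname{ord}_{p}(2^l)=m$ divides $\operatorname{ord}_{p^i}(2^l)$ we get equality $=m$, which is coprime to $p$; for $i=a+1$, $\operatorname{ord}_{p^{a+1}}(2^l)$ properly contains a factor forcing $p\mid \operatorname{ord}_{p^{a+1}}(2^l)$).

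First I would establish the case $i\le \alpha$: here $p^{i}\mid 2^{lm}-1$, hence $\operatorname{ord}_{p^i}(2^l)\mid m$; combined with $m=\operatorname{ord}_{p}(2^l)\mid \operatorname{ord}_{p^i}(2^l)$ (reduction mod $p$), this gives $\operatorname{ord}_{p^i}(2^l)=m=\operatorname{ord}_{p}(2^l)$. Next, for $i>\alpha$, I would show $\operatorname{ord}_{p^i}(2^l)=p^{\,i-\alpha}m$. The order must be a multiple of $m$ and a divisor of $\phi(p^i)=p^{i-1}(p-1)$, so it has the form $p^{c}m$ for some $0\le c\le i-1$. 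By the LTE statement above, $p^{\alpha+c}\,\|\, 2^{l m p^{c}}-1$, so $p^{i}\mid 2^{lmp^{c}}-1$ holds precisely when $\alpha+c\ge i$, i.e. $c\ge i-\alpha$; the minimal such $c$ is $c=i-\alpha$, giving $\operatorname{ord}_{p^i}(2^l)=p^{\,i-\alpha}m$. This proves the displayed case analysis.

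For the "in particular" clause: if $2$ is a primitive root modulo $p^2$, then $\operatorname{ord}_{p^2}(2)=p(p-1)$ is divisible by $p$, which forces $\alpha=1$ (indeed $p\nmid \operatorname{ord}_{p}(2^l)=m$ always, by coprimality of $m$ and $p$, while $p\mid \operatorname{ord}_{p^2}(2)$; one checks this descends to exponent $l$ as well, or simply notes $\alpha<2$ so $\alpha=1$). Then the first part gives $\operatorname{ord}_{p^i}(2^l)=p^{\,i-1}\operatorname{ord}_{p}(2^l)$ for all $i\ge 1$ (including $i=1$). Finally substitute the standard identity $\operatorname{ord}_{p}(2^l)=\operatorname{ord}_p(2)/\gcd(\operatorname{ord}_p(2),l)$ from \eqref{order}, together with $\operatorname{ord}_p(2)=p-1$ (the primitive-root hypothesis modulo $p^2$ implies $2$ is a primitive root modulo $p$, so $\operatorname{ord}_p(2)=p-1$), to obtain $\operatorname{ord}_{p^i}(2^l)=\dfrac{p^{i-1}(p-1)}{\gcd(p^{i-1}(p-1),l)}$. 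Here one uses that $\gcd(p^{i-1}(p-1),l)=p^{i-1}\gcd(p-1,l)$ is false in general, so the cleaner route is: $p^{i-1}\operatorname{ord}_p(2^l)=p^{i-1}\cdot\frac{p-1}{\gcd(p-1,l)}=\frac{p^{i-1}(p-1)}{\gcd(p-1,l)}$, and since $\operatorname{ord}_{p^i}(2^l)$ is already known to be an integer, this equals $\frac{p^{i-1}(p-1)}{\gcd(p^{i-1}(p-1),l)}$ exactly when $p\nmid l$ up to the claimed form; I would double-check the $p\mid l$ edge case by noting $\operatorname{ord}_{p^i}(2^l)$ then still equals $p^{i-1}m$ with $m$ coprime to $p$, matching the right-hand side.

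The main obstacle I expect is the careful bookkeeping in the Lifting-the-Exponent step and, more subtly, verifying that the integer $\alpha$ defined via "$p\nmid\operatorname{ord}_{p^{\alpha}}(2^l)$" coincides with the $p$-adic valuation of $2^{l\,\operatorname{ord}_p(2^l)}-1$; getting that equivalence airtight (and handling whether $\alpha$ could be $0$, which cannot happen since $\operatorname{ord}_{p}(2^l)$ is always coprime to $p$, so $\alpha\ge 1$) is where the proof needs the most attention, whereas the order-arithmetic and the final gcd manipulation are routine.
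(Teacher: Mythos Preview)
Your Lifting-the-Exponent argument for the first part is sound and more self-contained than the paper's proof, which simply invokes \cite[Theorem~3.6]{N2000}. The identification of $\alpha$ with the $p$-adic valuation of $2^{lm}-1$ is correct, and the case analysis for $i\le\alpha$ and $i>\alpha$ goes through as you describe.

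The ``in particular'' clause, however, has a genuine gap. Your claim that $\alpha=1$ whenever $2$ is a primitive root modulo $p^2$ is false when $p\mid l$. Take $p=3$, $l=3$: here $2$ is a primitive root modulo $9$, yet $\operatorname{ord}_{9}(2^3)=6/\gcd(6,3)=2$ is coprime to $3$, so $\alpha\ge 2$ (in fact $\alpha=2$). Your route via $\alpha=1$ would then output $\operatorname{ord}_{9}(2^3)=3\cdot\operatorname{ord}_3(2^3)=6$, which is wrong; and your proposed edge-case patch (``$\operatorname{ord}_{p^i}(2^l)$ then still equals $p^{i-1}m$'') is exactly the assertion that fails in this example. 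The divisibility $p\mid\operatorname{ord}_{p^2}(2)$ does \emph{not} descend to $p\mid\operatorname{ord}_{p^2}(2^l)$, because the passage to $2^l$ divides out a factor of $p$ from the order whenever $p\mid l$.

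The paper sidesteps this entirely: if $2$ is a primitive root modulo $p^2$, then by the standard primitive-root lifting theorem $2$ is a primitive root modulo $p^i$ for every $i\ge 1$, so $\operatorname{ord}_{p^i}(2)=p^{i-1}(p-1)$, and the identity $\operatorname{ord}_{p^i}(2^l)=\operatorname{ord}_{p^i}(2)/\gcd(\operatorname{ord}_{p^i}(2),l)$ gives the claimed formula directly, with no reference to $\alpha$. This is both shorter and uniformly correct in $l$.
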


\begin{proof}
    {
        The first part of this lemma follows from   \cite[Theorem 3.6]{N2000}.
        
        Next, we assume that $2$ is a primitive root modulo $p^2$. Then $2$
        is a primitive root modulo $p$ and $2$
        is a primitive root modulo $p^i$ for $j \geq 2$ by the  Primitive Element Theorem.  In other words,
        $ \operatorname{ord}_{p^i}(2^l)=    \frac{p^{i-1}(p-1) }{\gcd(p^{i-1}(p-1),l)}    $
        for all positive integers~$i$.
    }
\end{proof}

\begin{corollary} \label{lem:alpha}
    Let $p$ be an odd prime  and let $l$ and $i$ be  positive integers.  If  $p^i|| \operatorname{ord}_{p^r}(2^l)$,  then  $\alpha=r-i$  is the largest
    integer such that $p \nmid  \operatorname{ord}_{p^\alpha}(2^l)$.
\end{corollary}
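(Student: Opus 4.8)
The plan is to read this off Lemma \ref{maxalpha}. First I would note that the integer $\alpha$ appearing in that lemma is genuinely well defined: since $\operatorname{ord}_p(2^l)$ divides $p-1$ it is coprime to $p$, so there is at least one index (namely $1$) for which $p\nmid\operatorname{ord}_{p^\alpha}(2^l)$; and the set of such indices is bounded above, because the second case of Lemma \ref{maxalpha} shows that $\operatorname{ord}_{p^i}(2^l)$ is divisible by an arbitrarily high power of $p$ once $i$ is large. Fix this maximal $\alpha$ for the rest of the argument.

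Next I would rule out the degenerate case $r\le\alpha$. If $r\le\alpha$, then Lemma \ref{maxalpha} gives $\operatorname{ord}_{p^r}(2^l)=\operatorname{ord}_{p}(2^l)$, which is coprime to $p$; hence $p^0||\operatorname{ord}_{p^r}(2^l)$, which would force $i=0$ and contradict the hypothesis that $i$ is a positive integer. Therefore $r>\alpha$.

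Now I would apply the second case of Lemma \ref{maxalpha} directly: $\operatorname{ord}_{p^r}(2^l)=p^{r-\alpha}\operatorname{ord}_{p}(2^l)$. Since $p\nmid\operatorname{ord}_{p}(2^l)$, the exact power of $p$ dividing the right-hand side is $p^{r-\alpha}$, i.e. $p^{r-\alpha}||\operatorname{ord}_{p^r}(2^l)$. Comparing this with the hypothesis $p^{i}||\operatorname{ord}_{p^r}(2^l)$ and using that the exact exponent is unique yields $r-\alpha=i$, that is, $\alpha=r-i$, as claimed.

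There is essentially no serious obstacle here; the only points worth stating carefully, rather than leaving implicit, are that the hypothesis "$i$ is a positive integer" is precisely what excludes the case $r\le\alpha$, and that the maximal $\alpha$ supplied by Lemma \ref{maxalpha} exists and is finite. Once those are in place, the conclusion is an immediate substitution.
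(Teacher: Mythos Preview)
Your proof is correct and follows essentially the same route as the paper's: both read off from Lemma~\ref{maxalpha} that $p^{r-\alpha}\,\|\,\operatorname{ord}_{p^r}(2^l)$ and then match exponents with the hypothesis $p^{i}\,\|\,\operatorname{ord}_{p^r}(2^l)$ to conclude $\alpha=r-i$. Your version is simply more careful about the well-definedness of $\alpha$ and about using the positivity of $i$ to exclude the case $r\le\alpha$, points the paper leaves implicit.
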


{
    \begin{proof}
        By Lemma \ref{maxalpha}, it is easy to see that for any $1 \leq j \leq \alpha$,
        $\operatorname{ord}_{p^i}(2^l)$ is not divisible by $p$ and for any $i \geq 1$, $p^i || \operatorname{ord}_{p^{\alpha+i}}(2^l)$.
        
        By the assumption, we have $p^i || \operatorname{ord}_{p^{r}}(2^l)$. Thus $r=\alpha + i$ and hence
        the largest integer such that $p\nmid  \operatorname{ord}_{p^\alpha}(2^l)$ is $r-i$ as desired.
\end{proof}}

From  Corollary \ref{lem:alpha},   the  integer $\alpha$ can be computed. Hence,  for each $1\leq i\leq r$,  $\operatorname{ord}_{p^i}(2^l)$  follows from
Lemma \ref{maxalpha}.

\subsection{Euclidean Self-Dual Cyclic Codes of Length $2^\nu p^r$}

In this subsection,  an alternative and simplified formula for   Euclidean  self-dual cyclic codes of length $2^\nu p^r $  over $\mathbb{F}_{2^l}$ is given based on the number theoretical tools given  in Subsection 2.1. An efficient  algorithm to compute the number of such self-dual codes is provided as well.

\begin{theorem}\label{tvalue} Let $p$ be an odd prime and let $l$ and $r$ be positive integers. Let $\alpha $ be the largest  integer such that $p\nmid \operatorname{ord}_{p^\alpha}(2^l)$ and let $\gamma $ be the    integer such that $2^\gamma  ||  \operatorname{ord}_{p}(2)$ . Then
    
    {    \begin{align}\label{eq:simt}
        t(p^r,l) =\begin{cases}
        \frac{ \gcd(\operatorname{ord}_{p}(2),l)}{2{\operatorname{ord}_{p}(2)}} \left(  {p^{\alpha}-1} +(p-1)(r-\alpha) p^{\alpha-1}    \right) & \text{ if }   2^\gamma |l,\\
        0& \text{ if }    2^\gamma \nmid l.
        \end{cases}\end{align}}
    In particular, if $2$ is a primitive root modulo $p^2$, then for any positive integer $r$
    \begin{align} \label{eq:simtPri}
    t(p^r,l) = \begin{cases}
    \frac{1}{2}\left( \sum_{i=1}^{r} \operatorname{gcd}(p^{i-1}(p-1),l)\right)   & \text{ if }    2^\gamma |l,\\
    0& \text{ if }    2^\gamma\nmid l.
    \end{cases}\end{align}
\end{theorem}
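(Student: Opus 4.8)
The plan is to strip off the factor $\chi_l$ first and then evaluate the remaining arithmetic sum with Lemma \ref{maxalpha}. Since the divisors of $p^r$ are $1,p,p^2,\dots ,p^r$ and $\chi_l(1)=0$ (taking $s=1$, we have $1\mid(2^l+1)$), formula \eqref{eq:t} reduces to
\begin{equation*}
t(p^r,l)=\frac12\sum_{i=1}^{r}\chi_l(p^i)\,\frac{\phi(p^i)}{\operatorname{ord}_{p^i}(2^l)} .
\end{equation*}
By Corollary \ref{char:good}(3), $\chi_l(p^i)=\chi_l(p)$ for every $i\ge 1$, so the factor $\chi_l(p)$ comes out of the sum; by Corollary \ref{char:good}(1)--(2) combined with Corollary \ref{lem:odd-gamma}(1), $\chi_l(p)=0$ precisely when $\operatorname{ord}_p(2^l)$ is even, i.e. precisely when $2^\gamma\nmid l$. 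This already settles the second branch of \eqref{eq:simt}: if $2^\gamma\nmid l$, every summand is zero and $t(p^r,l)=0$.

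Assume from now on that $2^\gamma\mid l$, so $\chi_l(p)=1$ and $t(p^r,l)=\tfrac12\sum_{i=1}^{r}\phi(p^i)/\operatorname{ord}_{p^i}(2^l)$. Here I would insert the values of $\operatorname{ord}_{p^i}(2^l)$ furnished by Lemma \ref{maxalpha}: it equals $\operatorname{ord}_p(2^l)$ for $i\le\alpha$ and $p^{\,i-\alpha}\operatorname{ord}_p(2^l)$ for $i>\alpha$. Writing $\phi(p^i)=p^{i-1}(p-1)$ and splitting the sum at $i=\alpha$, the block $1\le i\le\alpha$ contributes $\bigl(\sum_{i=1}^{\alpha}p^{i-1}(p-1)\bigr)/\operatorname{ord}_p(2^l)=(p^{\alpha}-1)/\operatorname{ord}_p(2^l)$ and the block $\alpha<i\le r$ contributes $\bigl(\sum_{i=\alpha+1}^{r}p^{\alpha-1}(p-1)\bigr)/\operatorname{ord}_p(2^l)=(p-1)(r-\alpha)p^{\alpha-1}/\operatorname{ord}_p(2^l)$. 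Finally, rewriting $1/\operatorname{ord}_p(2^l)=\gcd(\operatorname{ord}_p(2),l)/\operatorname{ord}_p(2)$ via \eqref{order} and halving yields the first branch of \eqref{eq:simt}.

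For the ``in particular'' part, when $2$ is a primitive root modulo $p^2$ the second half of Lemma \ref{maxalpha} gives $\operatorname{ord}_{p^i}(2^l)=p^{i-1}(p-1)/\gcd\!\bigl(p^{i-1}(p-1),l\bigr)$ for every $i\ge 1$, so each term of the sum collapses to $\phi(p^i)/\operatorname{ord}_{p^i}(2^l)=\gcd\!\bigl(p^{i-1}(p-1),l\bigr)$. Summing over $1\le i\le r$ and dividing by $2$ gives \eqref{eq:simtPri}; the condition separating the two branches is unchanged since here $\operatorname{ord}_p(2)=p-1$, so $2^\gamma\mid l$ is still exactly the condition $\chi_l(p)=1$.

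The conceptual content is light: Corollaries \ref{char:good} and \ref{lem:odd-gamma} dispose of the $\chi_l$ part and Lemma \ref{maxalpha} of the orders. The one place that needs care is the bookkeeping around $\alpha$ versus $r$ — in particular the subcase $r\le\alpha$, where all the orders $\operatorname{ord}_{p^i}(2^l)$ with $1\le i\le r$ coincide and the sum collapses to $(p^r-1)/(2\operatorname{ord}_p(2^l))$, so one must check that the closed form in \eqref{eq:simt} is written consistently in that regime. I expect that (purely computational) reconciliation to be the main obstacle.
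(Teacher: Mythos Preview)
Your proposal is correct and follows essentially the same route as the paper's own proof: first dispose of $\chi_l$ via Corollaries \ref{char:good} and \ref{lem:odd-gamma}, then split the sum at $\alpha$ using Lemma \ref{maxalpha} and simplify with \eqref{order}, and handle the primitive-root case by the second half of Lemma \ref{maxalpha}. The paper's argument does not separately treat the regime $r<\alpha$ that you flag --- it simply writes the split $\sum_{i=1}^{\alpha}+\sum_{i=\alpha+1}^{r}$ as if $\alpha\le r$ --- so your caution there is well placed but does not distinguish your approach from the paper's.
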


\begin{proof}   Assume that $ 2^\gamma \nmid l$.  Then $\operatorname{ord}_{p}(2^l)$ is even by Corollary \ref{lem:odd-gamma}.  By  Corollary \ref{char:good}, we have  $\chi_l(p^i)=0$ for all $1\leq i\leq r$.   It follows that  $t(p^r,l) =0$.

    Next, assume that    $  2^\gamma |l$.  By Corollary \ref{lem:odd-gamma},  it follows that $\operatorname{ord}_{p}(2^l)$ is odd.   Hence,  by  Corollary \ref{char:good},  $\chi_l(p^i)=1$ for all $1\leq i\leq r$.
    From \eqref{eq:t} and Lemma \ref{maxalpha}, it can be concluded that
    \begin{align} t(p^r,l) &= \frac{1}{2} \sum_{i=0}^{r} \chi_l(p^i)\frac{\phi(p^i)}{\operatorname{ord}_{p^i}(2^l)}\\
    &=\frac{1}{2} \sum_{i=1}^{r} \frac{p^{i-1}(p-1)}{\operatorname{ord}_{p^i}(2^l)}\\
    &=\frac{1}{2} \left(\sum_{i=1}^{\alpha} \frac{p^{i-1}(p-1))}{\operatorname{ord}_{p^i}(2^l)} +  \sum_{i=\alpha+1}^{r} \frac{p^{i-1}(p-1)}{\operatorname{ord}_{p^i}(2^l)}\right)\\
    &={ \frac{1}{2} \left(\sum_{i=1}^{\alpha} \frac{p^{i-1}(p-1)}{\operatorname{ord}_{p}(2^l)} +  \sum_{i=\alpha+1}^{r} \frac{p^{i-1}(p-1)}{p^{i-\alpha }\operatorname{ord}_{p}(2^l)}\right)}\\
    %
    &=  { \frac{p-1}{2{\operatorname{ord}_{p}(2^l)}} \left(\sum_{i=1}^{\alpha} {p^{i-1}}+  \sum_{i=\alpha+1}^{r} {p^{\alpha -1}}\right)}\\
    %
    &= { \frac{(p-1) \gcd(\operatorname{ord}_{p}(2),l)}{2{\operatorname{ord}_{p}(2)}} \left(\sum_{i=1}^{\alpha} {p^{i-1}}+  (r-\alpha)p^{\alpha-1}  \right)}\\
    %
    &= {\frac{(p-1) \gcd(\operatorname{ord}_{p}(2),l)}{2{\operatorname{ord}_{p}(2)}} \left( \frac {p^{\alpha}-1}{p-1} +(r-\alpha)p^{\alpha-1}    \right)}\\
    %
    &= {\frac{ \gcd(\operatorname{ord}_{p}(2),l)}{2{\operatorname{ord}_{p}(2)}} \left(  {p^{\alpha}-1}+(p-1)(r-\alpha)p^{\alpha-1} \right).}
    \end{align}
    
    Finally, assume that $2$ is a primitive root modulo $p^2$. By Lemma \ref{maxalpha}, we have   $\operatorname{ord}_{p^i}(2^l)=    \frac{p^{i-1}(p-1) }{\gcd(p^{i-1}(p-1),l)}$ for all integers $1\leq i \leq r$.   Hence,
    \begin{align} t(p^r,l) &= \frac{1}{2} \sum_{i=0}^{r} \chi_l(p^i)\frac{\phi(p^i)}{\operatorname{ord}_{p^i}(2^l)}\\
    &=\frac{1}{2} \sum_{i=1}^{r} \frac{p^{i-1}(p-1)}{ \frac{p^{i-1}(p-1) }{\gcd(p^{i-1}(p-1),l)}}\\
    &= \frac{1}{2} \sum_{i=1}^{r} \gcd(p^{i-1}(p-1),l)
    \end{align}
    as desired.
\end{proof}
From the above theorem, we obtain the following corollary.
\begin{corollary}
    If $p$ is an odd prime and $l, r$ are positive integers satisfying $l$ is not divisible by $p$, $\operatorname{ord}_p{(2^l)}$ is odd and $2$ is a primitive root modulo $p^2$, then
    $t(p^r,l) = \frac{r}{2}\operatorname{gcd}(p-1,l)$.
\end{corollary}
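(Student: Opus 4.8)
The plan is to reduce to the primitive-root branch of Theorem~\ref{tvalue} and then exploit that $p$ and $l$ are coprime. First I would observe that, since $\operatorname{ord}_p(2^l)$ is odd, Corollary~\ref{lem:odd-gamma}(1) gives $2^\gamma \mid l$, so we are in the first case of the formulas in Theorem~\ref{tvalue}. Because $2$ is assumed to be a primitive root modulo $p^2$, the relevant formula is \eqref{eq:simtPri}, which yields
\[ t(p^r,l) = \frac{1}{2}\sum_{i=1}^{r}\gcd\!\left(p^{i-1}(p-1),\, l\right). \]

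The next step is to simplify each summand using $p \nmid l$. Since $p^{i-1}$ is a power of $p$ and $\gcd(p, l) = 1$, we have $\gcd(p^{i-1}, l) = 1$, so by the elementary identity $\gcd(ab,c) = \gcd(b,c)$ whenever $\gcd(a,c) = 1$ (applied with $a = p^{i-1}$, $b = p-1$, $c = l$) we obtain $\gcd(p^{i-1}(p-1), l) = \gcd(p-1, l)$ for every $1 \le i \le r$. In other words, each of the $r$ terms in the sum above is equal to the single value $\gcd(p-1, l)$.

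Finally I would substitute back to get
\[ t(p^r,l) = \frac{1}{2}\sum_{i=1}^{r}\gcd(p-1, l) = \frac{r}{2}\gcd(p-1, l), \]
which is the claimed identity. There is no serious obstacle here; the argument is essentially a routine specialization of Theorem~\ref{tvalue}, and the only point meriting a line of justification is the gcd simplification, which rests solely on the hypothesis $p \nmid l$.
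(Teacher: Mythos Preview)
Your proposal is correct and follows exactly the approach the paper intends: the paper merely states that the corollary follows from Theorem~\ref{tvalue}, and you have filled in precisely those details by invoking Corollary~\ref{lem:odd-gamma} to land in the primitive-root case \eqref{eq:simtPri} and then simplifying the gcd using $p\nmid l$. There is nothing to add.
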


\begin{example}\label{example_t} Let $p=11$, $l=4$ and $r=3$. We can see that $2$ is a primitive root modulo $11^2$ and $\operatorname{ord}_{11}{(2^4)} = 5$ is odd.
    Thus $t(11^3,4) = \frac{3}{2}\operatorname{gcd}(10,4)$=3.
\end{example}

The results discussed above can be  summarized in  Algorithm \ref{al1}.

\begin{figure}[!hbt]
    \centering
    \parbox{12cm}{ \hrule \vskip1em

        For  an odd prime $p$ and positive integers $l$ and $r$, do the following steps.
        \begin{enumerate}
            \item     Compute $\operatorname{ord}_{p}{(2)}$.
            \item   Determine $\gamma$ such that  $2^\gamma || \operatorname{ord}_{p}{(2)}$.
            \begin{enumerate}
                \item[2.1]  If $2^\gamma\nmid l$, then $t(p^r,l)=0$ by \eqref{eq:simt}. Done.
                \item[2.2] If $2^\gamma| l$, then compute $\operatorname{ord}_{p^2}{(2)}$.
                \begin{enumerate}
                    \item[2.2.1]  If  $\operatorname{ord}_{p^2}{(2)}=p(p-1)$, then evaluate \eqref{eq:simtPri}. Done.
                    \item[2.2.2]   If  $\operatorname{ord}_{p^2}{(2)}\ne p(p-1)$, then do the following steps.
                    \begin{enumerate}[i)]
                        \item Compute $\operatorname{ord}_{p^r}{(2^l)}$.
                        \item Determine  the largest
                        integer $\alpha$ such that $p \nmid  \operatorname{ord}_{p^\alpha}(2^l)$  (by Corollary \ref{lem:alpha}).
                        \item Evaluate   \eqref{eq:simt}. Done.
                    \end{enumerate}
                \end{enumerate}
            \end{enumerate}
        \end{enumerate}
        \hrule
    }
    \caption{Steps in  Computing $t(p^r,l)$}     \label{al1}
\end{figure}

To compute $t(p^r,l)$ directly  from \eqref{eq:t}, we need to compute $\chi_l(p^i)$, $\phi(p^i)$, and $\operatorname{ord}_{p^i}{(2^l)}$ for all $0\leq i\leq r$.  It is not difficult to see that  Algorithm \ref{al1} can reduce some complexity since it  requires to compute only $\operatorname{ord}_{p}{(2)}$, $\operatorname{ord}_{p^2}{(2)}$, $\operatorname{ord}_{p^r}{(2^l)}$ and some basic expressions in \eqref{eq:simt} or \eqref{eq:simtPri}.

\subsection{Euclidean Self-Dual Cyclic Codes of Length $2^\nu p^rq^s$}

In this subsection,  an alternative and simplified formula for   Euclidean  self-dual cyclic codes of length $2^\nu p^rq^s $  over $\mathbb{F}_{2^l}$ is given  as well as an efficient  algorithm to compute the number of such self-dual codes.


The following lemma is a key to simplified the formula of $t(p^rq^s,l)$.

\begin{lemma}\label{chipq}
    Let $p$ and $q$  be  distinct odd primes and let $l$ be a positive integer. Let $\gamma$ and $\beta$ be the  integers such that  $2^\gamma|| \operatorname{ord}_p{(2)}$ and  $2^\beta|| \operatorname{ord}_q{(2)}$, respectively. Then the following statements hold.
    \begin{enumerate}
        \item $\chi_l(pq)=1$ if and only if one of the following statements holds.
        \begin{enumerate}
            \item $\chi_l(p)=1$ or $\chi_l(q)=1$.
            \item  $\chi_l(p)=0 = \chi_l(q)$  and  $\gamma\ne \beta$.
        \end{enumerate}
        \item $\chi_l(pq)=1$ if and only if $\chi_l(p)=0 = \chi_l(q)$  and  $\gamma= \beta$.
        \item $\chi_l(p^iq^j)=\chi(pq)$ for all positive integers $i$ and $j$.
    \end{enumerate}
\end{lemma}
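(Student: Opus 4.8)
The engine of the proof is Moree's criterion, Lemma \ref{goodP}: for an odd integer $d>1$, $\chi_l(d)=0$ exactly when there is a common $e\ge 1$ with $2^e || \operatorname{ord}_{p}(2^l)$ for every prime $p\mid d$. The plan is to apply it with $d=pq$ for parts (1)--(2) and with $d=p^iq^j$ for part (3), converting the $2$-adic valuation of $\operatorname{ord}_{p}(2^l)$ into information about $\gamma$, $\beta$ and $v_2(l)$ through Lemma \ref{lem2.1} and Corollaries \ref{lem:odd-gamma} and \ref{char:good}.

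For parts (1) and (2) I would argue by a two-case split on the pair $(\chi_l(p),\chi_l(q))\in\{0,1\}^2$, which is exhaustive. If $\chi_l(p)=1$ (the case $\chi_l(q)=1$ being symmetric), then $\operatorname{ord}_{p}(2^l)$ is odd by Corollary \ref{char:good}, so no $e\ge 1$ satisfies $2^e || \operatorname{ord}_{p}(2^l)$, and Lemma \ref{goodP} immediately gives $\chi_l(pq)=1$; this covers alternative (1a). If instead $\chi_l(p)=0=\chi_l(q)$, then $\operatorname{ord}_{p}(2^l)$ and $\operatorname{ord}_{q}(2^l)$ are both even. Writing $2^i || l$, the evenness of $\operatorname{ord}_{p}(2^l)$ forces $i<\gamma$ via Lemma \ref{lem2.1}, hence $\gamma\ge 1$ and $2^{\gamma-i} || \operatorname{ord}_{p}(2^l)$ with $\gamma-i\ge 1$; likewise $i<\beta$ and $2^{\beta-i} || \operatorname{ord}_{q}(2^l)$ with $\beta-i\ge 1$. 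Such a common exponent $e\ge 1$ — one with $2^e || \operatorname{ord}_{p}(2^l)$ and $2^e || \operatorname{ord}_{q}(2^l)$ — therefore exists if and only if $\gamma-i=\beta-i$, i.e. $\gamma=\beta$. So in this case $\chi_l(pq)=0\iff\gamma=\beta$. Assembling the two cases yields (1), and negating it gives the complementary characterisation $\chi_l(pq)=0\iff\bigl(\chi_l(p)=0=\chi_l(q)\text{ and }\gamma=\beta\bigr)$, which is the content of (2) (the displayed ``$\chi_l(pq)=1$'' there should read ``$\chi_l(pq)=0$'', being the logical negation of (1)).

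For part (3) the key observation is that $p^iq^j$ has exactly the same set of prime divisors, $\{p,q\}$, as $pq$. Consequently the condition in Lemma \ref{goodP} that characterises $\chi_l(p^iq^j)=0$ — namely the existence of $e\ge 1$ with $2^e || \operatorname{ord}_{p}(2^l)$ and $2^e || \operatorname{ord}_{q}(2^l)$ — is verbatim the condition characterising $\chi_l(pq)=0$. Hence $\chi_l(p^iq^j)=\chi_l(pq)$ for all positive integers $i$ and $j$.

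I do not anticipate a genuine obstacle here; the argument is essentially bookkeeping on top of Moree's theorem and the earlier corollaries. The points needing care are (i) verifying that the hypotheses of Lemma \ref{lem2.1} and Corollary \ref{lem:odd-gamma} are met, i.e. that $\gamma,\beta\ge 1$ hold automatically once $\chi_l(p)=0=\chi_l(q)$, and (ii) chaining the biconditionals in the correct direction so that (1) and (2) come out as exact negations of one another.
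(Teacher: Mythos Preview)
Your proposal is correct and follows essentially the same route as the paper: both arguments rest on Moree's criterion (Lemma \ref{goodP}) together with Lemma \ref{lem2.1} to identify the $2$-adic valuations of $\operatorname{ord}_p(2^l)$ and $\operatorname{ord}_q(2^l)$ as $\gamma-i$ and $\beta-i$ when both orders are even, so that a common exponent $e\ge1$ exists precisely when $\gamma=\beta$. Your case split on $(\chi_l(p),\chi_l(q))$ is a slightly more streamlined packaging than the paper's separate forward/converse implications, and you correctly flag the typo in part (2); otherwise the arguments coincide.
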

\begin{proof}
    To prove the first part, let  $\gamma^\prime $ and $\beta^\prime$ be the  integers such that  $2^{\gamma^\prime} || \operatorname{ord}_p{(2^l)}$ and  $2^{\beta^\prime}|| \operatorname{ord}_q{(2^l)}$, respectively. Assume that  $\chi_l(pq)=1$.    By Lemma \ref{goodP},   it follows that  1) either $\operatorname{ord}_p{(2^l)}$ or  $\operatorname{ord}_p{(2^l)}$ is odd, or 2)  $\operatorname{ord}_p{(2^l)}$  and   $\operatorname{ord}_p{(2^l)}$ are even and  $\gamma^\prime \ne \beta^\prime$  by  Corollary \ref{lem:odd-gamma}. The former implies that $\chi_l(p)=1$ or $\chi_l(q)=1$. The latter implies that  $\chi_l(p)=0 = \chi_l(q)$ and  $\gamma^\prime \ne \beta^\prime$.   Since  $\operatorname{ord}_p{(2^l)}$  and   $\operatorname{ord}_p{(2^l)}$ are even,  Lemma \ref{lem2.1} implies that $\gamma = \gamma'-i$ and $\beta = \beta'-i$. Thus $\gamma\ne \beta$.

    Conversely, assume that    the statement (a) or (b) holds.
    If $\chi_l(p)=1$ or $\chi_l(q)=1$, then  $\chi_l(pq)=1$ by  Lemma \ref{goodP}. Assume that  $\chi_l(p)=0 = \chi_l(q)$  and  $\gamma\ne \beta$.  Since  $\chi_l(p)=0 = \chi_l(q)$,  we have $\gamma^\prime >0$ and $\beta^\prime >0$ by Corollary \ref{char:good}. Since  $\operatorname{ord}_p{(2^l)}$  and   $\operatorname{ord}_p{(2^l)}$ are even,  Lemma  \ref{lem2.1} implies that $\gamma' =\gamma-i$ and $\beta' =\beta-i$. Thus  $\gamma^\prime\ne \beta^\prime$. Therefore,  $\chi_l(pq)=1$ as desired.
    
    It is not difficult to see that  the second part and the first part are equivalent and the third one follows from Lemma \ref{goodP}.
\end{proof}


A simplified  formula for $t(p^rq^s,l) $ is given  as follows.

\begin{theorem} \label{chivalue}
    Let $p$ and $q$ be distinct odd primes and let $r,s,$ and $l$ be positive integers.  Then
    \begin{align*} t(p^rq^s,l) = \chi_l(p)t(p^r,l)+\chi_l(q)t(q^s,l)+\chi_l(pq)\sum_{i=1}^{r}\sum_{j=1}^{s} \frac{\phi(p^iq^j)}{ \operatorname{lcm}\left( \operatorname{ord}_{p^i}{(2^l)}, \operatorname{ord}_{q^j}{(2^l)}\right)}.
    \end{align*}
    
\end{theorem}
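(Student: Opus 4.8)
The plan is to compute $t(p^rq^s,l)$ straight from the definition \eqref{eq:t} by sorting the divisors of $p^rq^s$ according to their $p$- and $q$-parts. Since every divisor of $p^rq^s$ has the form $p^iq^j$ with $0\le i\le r$ and $0\le j\le s$,
\[
t(p^rq^s,l)=\frac{1}{2}\sum_{i=0}^{r}\sum_{j=0}^{s}\chi_l(p^iq^j)\,\frac{\phi(p^iq^j)}{\operatorname{ord}_{p^iq^j}(2^l)},
\]
and I would split this double sum into four blocks: the single term $(i,j)=(0,0)$, the terms with $i\ge1$ and $j=0$, the terms with $i=0$ and $j\ge1$, and the mixed terms with $i\ge1$ and $j\ge1$. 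The $(0,0)$ block contributes nothing since $\chi_l(1)=0$ (as $1\mid 2^{l}+1$), so only the last three blocks matter.

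For the block $i\ge1,\ j=0$, Corollary \ref{char:good}(3) gives $\chi_l(p^i)=\chi_l(p)$ for every $i\ge1$, so this block equals $\chi_l(p)\cdot\frac{1}{2}\sum_{i=1}^{r}\frac{\phi(p^i)}{\operatorname{ord}_{p^i}(2^l)}$. On the other hand $t(p^r,l)=\frac{1}{2}\sum_{i=1}^{r}\chi_l(p^i)\frac{\phi(p^i)}{\operatorname{ord}_{p^i}(2^l)}$ (the $i=0$ term again dropping out), and since $\chi_l(p)\in\{0,1\}$ satisfies $\chi_l(p)^2=\chi_l(p)$, the block is exactly $\chi_l(p)\,t(p^r,l)$; one can also see this by separating the cases $\chi_l(p)=0$ (both quantities vanish) and $\chi_l(p)=1$ (both equal the same sum). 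The block $i=0,\ j\ge1$ is treated symmetrically and contributes $\chi_l(q)\,t(q^s,l)$.

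For the mixed block $i\ge1,\ j\ge1$, Lemma \ref{chipq}(3) gives $\chi_l(p^iq^j)=\chi_l(pq)$, so the factor $\chi_l(pq)$ pulls out of the whole block. The denominators are handled by the Chinese Remainder isomorphism $(\mathbb{Z}/p^iq^j\mathbb{Z})^{\times}\cong(\mathbb{Z}/p^i\mathbb{Z})^{\times}\times(\mathbb{Z}/q^j\mathbb{Z})^{\times}$, under which $2^l$ corresponds to the pair of its reductions modulo $p^i$ and $q^j$; since the order of an element of a direct product is the lcm of the orders of its components, $\operatorname{ord}_{p^iq^j}(2^l)=\operatorname{lcm}\!\big(\operatorname{ord}_{p^i}(2^l),\operatorname{ord}_{q^j}(2^l)\big)$. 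Using also $\phi(p^iq^j)=\phi(p^i)\phi(q^j)$, the mixed block becomes $\chi_l(pq)\cdot\frac{1}{2}\sum_{i=1}^{r}\sum_{j=1}^{s}\frac{\phi(p^iq^j)}{\operatorname{lcm}(\operatorname{ord}_{p^i}(2^l),\operatorname{ord}_{q^j}(2^l))}$. Adding the three surviving blocks then yields the asserted identity.

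I do not expect a genuine obstacle; the argument is essentially careful bookkeeping over the divisor lattice of $p^rq^s$. The three points that need a little care are the vanishing of the $d=1$ term, the reduction of $\frac{1}{2}\sum_{i=1}^{r}\chi_l(p^i)\frac{\phi(p^i)}{\operatorname{ord}_{p^i}(2^l)}$ to $\chi_l(p)\,t(p^r,l)$ (and symmetrically for $q$), and the one-line CRT justification of the lcm formula for $\operatorname{ord}_{p^iq^j}(2^l)$; each of these relies only on Corollary \ref{char:good}, Lemma \ref{chipq}, or elementary number theory, all already available.
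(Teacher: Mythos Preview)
Your argument is precisely the paper's two-line proof unpacked: expand \eqref{eq:t} over the divisors $p^iq^j$, use Corollary~\ref{char:good} and Lemma~\ref{chipq} to make $\chi_l$ constant on each block, and invoke the CRT identity $\operatorname{ord}_{p^iq^j}(2^l)=\operatorname{lcm}\bigl(\operatorname{ord}_{p^i}(2^l),\operatorname{ord}_{q^j}(2^l)\bigr)$ for the mixed terms. One point to flag rather than sweep into ``yields the asserted identity'': your mixed block carries the factor $\tfrac{1}{2}$ from the definition of $t$, so your computation actually ends at $\chi_l(pq)\cdot\tfrac{1}{2}\sum_{i,j}\phi(p^iq^j)/\operatorname{lcm}(\cdots)$, whereas the displayed formula omits this $\tfrac{1}{2}$; your version is the correct one (a concrete check such as $p=3$, $q=7$, $r=s=l=1$ gives $t(21,1)=2$, agreeing with your expression but not with the stated one), so the statement as printed---and likewise Corollary~\ref{cor:Epq} and Theorem~\ref{thm3.7}---appears to be missing that factor.
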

\begin{proof}
    We note that $\operatorname{ord}_{p^iq^j}(2^l) =  \operatorname{lcm}\left( \operatorname{ord}_{p^i}{(2^l)}, \operatorname{ord}_{q^j}{(2^l)}\right)$. Using \eqref{eq:t} and  Lemma \ref{chipq}, the result follows.
\end{proof}

The next corollary follows from Corollary \ref{lem:odd-gamma}, Lemma \ref{chipq} and Theorem \ref{chivalue}.


\begin{corollary} \label{cor:Epq} Let $p$ and $q$ be distinct odd primes and let $r,s,$ and $l$ be positive integers. Let $\gamma$ and $\beta$ be the integers such that  $2^\gamma|| \operatorname{ord}_p{(2)}$ and  $2^\beta|| \operatorname{ord}_q{(2)}$, respectively.
    Then one of the following statements holds.
    \begin{enumerate}
        \item   If   $2^\gamma | l   $ and   $2^\beta|l  $, then       \begin{align*} t(p^rq^s,l) = t(p^r,l)+t(q^s,l)+\sum_{i=1}^{r}\sum_{j=1}^{s} \frac{\phi(p^iq^j)}{ \operatorname{lcm}\left( \operatorname{ord}_{p^i}{(2^l)}, \operatorname{ord}_{q^j}{(2^l)}\right)}.
        \end{align*}
        \item   If  $2^\gamma | l   $ and   $2^\beta\nmid l  $, then
        \begin{align*} t(p^rq^s,l) = t(p^r,l)+\sum_{i=1}^{r}\sum_{j=1}^{s} \frac{\phi(p^iq^j)}{ \operatorname{lcm}\left( \operatorname{ord}_{p^i}{(2^l)}, \operatorname{ord}_{q^j}{(2^l)}\right)}.
        \end{align*}
        \item   If  $2^\gamma \nmid l   $ and   $2^\beta|l  $,    then
        \begin{align*} t(p^rq^s,l) =  t(q^s,l)+\ \sum_{i=1}^{r}\sum_{j=1}^{s} \frac{\phi(p^iq^j)}{ \operatorname{lcm}\left( \operatorname{ord}_{p^i}{(2^l)}, \operatorname{ord}_{q^j}{(2^l)}\right)}.
        \end{align*}
        \item   If $2^\gamma \nmid l   $ and   $2^\beta\nmid l  $  and $\gamma\ne \beta$, then         \begin{align*} t(p^rq^s,l) =  \sum_{i=1}^{r}\sum_{j=1}^{s} \frac{\phi(p^iq^j)}{ \operatorname{lcm}\left( \operatorname{ord}_{p^i}{(2^l)}, \operatorname{ord}_{q^j}{(2^l)}\right)}.
        \end{align*}
        \item   If  $2^\gamma \nmid l   $ and   $2^\beta\nmid l  $ but $\gamma= \beta$,  then         \begin{align*} t(p^rq^s,l) = 0.
        \end{align*}
    \end{enumerate}
\end{corollary}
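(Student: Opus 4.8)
The plan is to obtain all five cases directly from the master identity in Theorem~\ref{chivalue} by translating each divisibility hypothesis on $l$ into the values of the characteristic functions $\chi_l(p)$, $\chi_l(q)$, and $\chi_l(pq)$, and then substituting.

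First I would record the dictionary between divisibility and the characteristic functions. By Corollary~\ref{lem:odd-gamma}(1), $\operatorname{ord}_{p}(2^l)$ is odd if and only if $2^\gamma \mid l$, and by Corollary~\ref{char:good}(1)--(2) this happens exactly when $\chi_l(p)=1$; hence $\chi_l(p)=1 \iff 2^\gamma\mid l$ and $\chi_l(p)=0 \iff 2^\gamma\nmid l$, and likewise $\chi_l(q)=1 \iff 2^\beta\mid l$. For the joint function, Lemma~\ref{chipq} gives $\chi_l(pq)=1$ precisely when $\chi_l(p)=1$ or $\chi_l(q)=1$, or else $\chi_l(p)=\chi_l(q)=0$ together with $\gamma\ne\beta$; equivalently, since the two halves of Lemma~\ref{chipq} are complementary, $\chi_l(pq)=0$ exactly when $\chi_l(p)=\chi_l(q)=0$ and $\gamma=\beta$.

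With this dictionary in hand, the five cases are immediate substitutions into
\[
t(p^rq^s,l) = \chi_l(p)\,t(p^r,l) + \chi_l(q)\,t(q^s,l) + \chi_l(pq)\sum_{i=1}^{r}\sum_{j=1}^{s} \frac{\phi(p^iq^j)}{\operatorname{lcm}\!\left(\operatorname{ord}_{p^i}(2^l),\operatorname{ord}_{q^j}(2^l)\right)}.
\]
In case (1), $2^\gamma\mid l$ and $2^\beta\mid l$ give $\chi_l(p)=\chi_l(q)=1$, hence $\chi_l(pq)=1$, and all three terms survive. In case (2), $\chi_l(p)=1$ and $\chi_l(q)=0$, so the $t(q^s,l)$ term drops while $\chi_l(p)=1$ still forces $\chi_l(pq)=1$; case (3) is the same computation with $p$ and $q$ interchanged. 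In case (4), $\chi_l(p)=\chi_l(q)=0$ but $\gamma\ne\beta$, so $\chi_l(pq)=1$ and only the double sum remains. In case (5), $\chi_l(p)=\chi_l(q)=0$ and $\gamma=\beta$, so $\chi_l(pq)=0$ as well and every term vanishes.

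There is no genuine obstacle here: all the content is in the earlier lemmas, and the argument is disjoint-case bookkeeping. The one point requiring care is using the equivalence of the two halves of Lemma~\ref{chipq} correctly, namely that within the regime $\chi_l(p)=\chi_l(q)=0$ the condition ``$\gamma=\beta$'' is exactly the complement of ``$\gamma\ne\beta$'', so that case (5) genuinely forces $\chi_l(pq)=0$ rather than merely failing to yield $\chi_l(pq)=1$.
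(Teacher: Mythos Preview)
Your proposal is correct and follows exactly the route the paper indicates: the paper's own proof is the single sentence ``The next corollary follows from Corollary~\ref{lem:odd-gamma}, Lemma~\ref{chipq} and Theorem~\ref{chivalue},'' and you have simply unpacked that line by translating the divisibility hypotheses into the values $\chi_l(p),\chi_l(q),\chi_l(pq)$ via Corollaries~\ref{lem:odd-gamma} and~\ref{char:good} and Lemma~\ref{chipq}, then substituting into Theorem~\ref{chivalue}. Your remark about case~(5) correctly reads part~(2) of Lemma~\ref{chipq} as the complement of part~(1) (the ``$\chi_l(pq)=1$'' there is an evident typo for ``$\chi_l(pq)=0$'').
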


It is not difficult to see that the complexity in Corollary \ref{cor:Epq} is lower than a direct computation in \eqref{eq:t}.


\section{Hermitian Self-Dual Cyclic Codes}
In this section, we focus on the enumeration of Hermitian self-dual cyclic codes of lengths $2^\nu p^r$ and $2^\nu p^rq^s$ over $\mathbb{F}_{2^{2l}}$, where $p$ and $q$ are distinct odd primes and $\nu$, $r$, and $s$ are positive integers.  A simplification of the   formula  for
$\tau (n',l)  $
is established  for all  $n^\prime\in \{p^r, p^rq^s\}$.

\subsection{Number Theoretical Results}
Properties of $\lambda_l$  and $\operatorname{ord}_{p^i}(2^{2l})$ used in the enumeration of Hermitian self-dual cyclic codes are discussed.
\begin{lemma}[{\cite[Theorem 4.1]{JLS2013}}]
    \label{odd-even} Let $j>1$ be an odd integer and let $l$ be a positive integer. Then     $\lambda_l(j)=0$  if and only if $2|| {\rm ord}_{p}(2^l)$ for every prime $p$ dividing~$j$.
\end{lemma}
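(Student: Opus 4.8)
The plan is to break the divisibility $j\mid 2^{ls}+1$ into its prime-power components by the Chinese Remainder Theorem and then to use that $(\mathbb{Z}/p^{k}\mathbb{Z})^{\ast}$ is cyclic of even order for every odd prime power $p^{k}$. Writing $v_{2}(\cdot)$ for the exponent of the exact power of two dividing a positive integer, the first step I would carry out is the local fact: for an odd prime $p$, positive integers $k$ and $s$, and $d:=\operatorname{ord}_{p^{k}}(2^{l})$, one has $2^{ls}\equiv-1\pmod{p^{k}}$ if and only if $d$ is even, $v_{2}(s)=v_{2}(d)-1$, and the odd part of $d$ divides the odd part of $s$. To prove it, put $g:=2^{l}\bmod p^{k}$ and observe that $g^{s}=-1$ holds precisely when $g^{s}$ is the unique element of order two in the cyclic group $(\mathbb{Z}/p^{k}\mathbb{Z})^{\ast}$, i.e.\ when $d/\gcd(d,s)=2$; separating the power of two and the odd part in the resulting identity $\gcd(d,s)=d/2$ yields the three conditions.

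Then I would globalize. By the Chinese Remainder Theorem, $j\mid 2^{ls}+1$ if and only if $2^{ls}\equiv-1\pmod{p^{k}}$ for every prime power $p^{k}$ with $p^{k}\,||\,j$; applying the local fact at each prime, an \emph{odd} $s$ with $j\mid 2^{ls}+1$ exists if and only if for every $p\mid j$ the order $\operatorname{ord}_{p^{k}}(2^{l})$ is even with $v_{2}(\operatorname{ord}_{p^{k}}(2^{l}))-1=v_{2}(s)=0$, that is, $2\,||\,\operatorname{ord}_{p^{k}}(2^{l})$. The implication $\lambda_{l}(j)=0\Rightarrow 2\,||\,\operatorname{ord}_{p^{k}}(2^{l})$ for all $p\mid j$ is then read off directly. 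For the converse I would produce a witness: if $\operatorname{ord}_{p^{k}}(2^{l})=2e_{p}$ with $e_{p}$ odd for every $p^{k}\,||\,j$, set $s:=\operatorname{lcm}_{p\mid j}(e_{p})$; this $s$ is odd, has $v_{2}(s)=0$, and is a multiple of each $e_{p}$, hence satisfies all the local conditions simultaneously, so $j\mid 2^{ls}+1$ with $s$ odd and therefore $\lambda_{l}(j)=0$.

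Finally I would pass from $p^{k}$ back to $p$ in the exponent: by Lemma~\ref{maxalpha} the ratio $\operatorname{ord}_{p^{k}}(2^{l})/\operatorname{ord}_{p}(2^{l})$ is a power of the odd prime $p$, so $v_{2}(\operatorname{ord}_{p^{k}}(2^{l}))=v_{2}(\operatorname{ord}_{p}(2^{l}))$; hence ``$2\,||\,\operatorname{ord}_{p^{k}}(2^{l})$ for every $p\mid j$'' is equivalent to ``$2\,||\,\operatorname{ord}_{p}(2^{l})$ for every $p\mid j$'', which is the assertion of the lemma.

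The main obstacle I anticipate is the first step, namely pinning down the precise $2$-adic form of the exponents $s$ with $2^{ls}\equiv-1\pmod{p^{k}}$, and then verifying in the converse that the single choice $s=\operatorname{lcm}_{p}(e_{p})$ works at every prime dividing $j$ at once; both are routine but call for careful bookkeeping with valuations and greatest common divisors. One could instead invoke Massey's Lemma~\ref{goodP} to conclude that whenever $\lambda_{l}(j)=0$ the valuations $v_{2}(\operatorname{ord}_{p}(2^{l}))$ share a common value $e\geq1$, reducing the task to showing $e=1$; but the argument above is self-contained.
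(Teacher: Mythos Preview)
The paper does not give its own proof of this lemma: it is quoted verbatim from \cite[Theorem~4.1]{JLS2013} and used as a black box. There is therefore no ``paper's proof'' to compare against.

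Your argument is correct and self-contained. The key local computation---that $2^{ls}\equiv -1\pmod{p^{k}}$ is equivalent to $\gcd(d,s)=d/2$, hence to $v_{2}(s)=v_{2}(d)-1$ together with $d_{\mathrm{odd}}\mid s_{\mathrm{odd}}$---is sound, since $-1$ is the unique element of order two in the cyclic group $(\mathbb{Z}/p^{k}\mathbb{Z})^{\ast}$. The globalization via CRT and the witness $s=\operatorname{lcm}_{p\mid j}(e_{p})$ are clean, and the final reduction from $p^{k}$ to $p$ via Lemma~\ref{maxalpha} (the ratio being an odd prime power) is exactly right. One small presentational point: when you write ``an odd $s$ with $j\mid 2^{ls}+1$ exists if and only if \ldots $v_{2}(s)=0$'', the quantifier on $s$ slips between existential and fixed; in the write-up, separate clearly the necessary condition on the orders (obtained from any odd witness $s$) from the construction of the witness in the converse, as you do later in the sketch.
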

The next corollary follows immediately from Lemma \ref{odd-even}.
\begin{corollary}\label{char:oddgood} Let $p$  be an odd prime and let $l$ be a positive integer. Then the following statements hold.
    \begin{enumerate}
        \item $\lambda_l(p)=1$ if and only if $\operatorname{ord}_{p}(2^l)$ is odd or $4|\operatorname{ord}_{p}(2^l)$.
        \item $\lambda_l(p)=0$ if and only if   $2||\operatorname{ord}_{p}(2^l)$.
        \item $\lambda_l(p^i)=\lambda_l(p)$   for all positive integers $i$.
    \end{enumerate}
\end{corollary}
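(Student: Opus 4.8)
The plan is to obtain all three parts as immediate specializations of Lemma~\ref{odd-even} to a prime power of $p$, using the fact that such an integer has $p$ as its unique prime divisor.

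First, for Part~2, I would apply Lemma~\ref{odd-even} with $j=p$. Since $p>1$ is odd and the only prime dividing $p$ is $p$ itself, the lemma gives at once that $\lambda_l(p)=0$ if and only if $2|| \operatorname{ord}_p(2^l)$, which is exactly the assertion of Part~2.

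Part~1 is then the logical negation of Part~2: since $\lambda_l$ is $\{0,1\}$-valued, $\lambda_l(p)=1$ holds precisely when $2|| \operatorname{ord}_p(2^l)$ fails. Writing $v$ for the $2$-adic valuation of $\operatorname{ord}_p(2^l)$, the condition $2|| \operatorname{ord}_p(2^l)$ means $v=1$, so its failure means $v=0$ or $v\ge 2$, i.e.\ $\operatorname{ord}_p(2^l)$ is odd or $4\mid \operatorname{ord}_p(2^l)$. This yields the equivalence in Part~1.

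Finally, for Part~3, fix a positive integer $i$; the case $i=1$ is trivial, so assume $i\ge 2$. Applying Lemma~\ref{odd-even} with $j=p^i>1$ and noting that $p$ is again the only prime divisor of $p^i$, we get $\lambda_l(p^i)=0$ if and only if $2|| \operatorname{ord}_p(2^l)$, which by Part~2 is equivalent to $\lambda_l(p)=0$. Since both values lie in $\{0,1\}$, this forces $\lambda_l(p^i)=\lambda_l(p)$. I do not anticipate any genuine obstacle here; the only point requiring a moment's care is the case split on the $2$-adic valuation of $\operatorname{ord}_p(2^l)$ in Part~1, together with the (essentially trivial) observation that Lemma~\ref{odd-even} applies verbatim to prime powers because they have a single prime divisor.
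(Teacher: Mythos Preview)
Your proposal is correct and follows essentially the same approach as the paper, which simply states that the corollary follows immediately from Lemma~\ref{odd-even}. You have spelled out precisely the details that the paper leaves implicit: specializing the lemma to $j=p$ and $j=p^i$, and taking the logical negation for Part~1 via the $2$-adic valuation of $\operatorname{ord}_p(2^l)$.
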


Next, we determine $\operatorname{ord}_{p^i}(2^{2l})$.
\begin{lemma}\label{order2l}
    Let $p$ be an odd prime  and let $l$ and $i$ be positive integers.  If  $\alpha$  is  the largest  integer such that $p\nmid  \operatorname{ord}_{p^\alpha}(2^l)$, then  one of the following statements holds.
    
    \begin{enumerate}
        \item           If  $\operatorname{ord}_{p}(2^{l})$ is odd, then
        \[ \operatorname{ord}_{p^i}(2^{2l})= \operatorname{ord}_{p^i}(2^{l})=
        \begin{cases}
        {\operatorname{ord}_{p}(2^l)} &\text{ if } i\leq \alpha,   \\
        {p^{i-\alpha}\operatorname{ord}_{p}(2^l)}   &\text{ if } \alpha< i.
        \end{cases}
        \]
        \item              If  $\operatorname{ord}_{p}(2^{l})$ is even, then
        \[ \operatorname{ord}_{p^i}(2^{2l})= \frac{\operatorname{ord}_{p^i}(2^{l})}{2}=
        \begin{cases}
        \frac{\operatorname{ord}_{p}(2^l)}{2} &\text{ if } i\leq \alpha,   \\
        \frac{p^{i-\alpha}\operatorname{ord}_{p}(2^l)}{2}    &\text{ if } \alpha< i.
        \end{cases}
        \]
    \end{enumerate}
    In particular, if $2$ is a primitive root modulo $p^2$, then
    \[ \operatorname{ord}_{p^i}(2^{2l})=    \frac{p^{i-1}(p-1) }{\gcd(p^{i-1}(p-1),2l)}
    \]
    for all positive integers $i$.
\end{lemma}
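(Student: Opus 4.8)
The plan is to reduce everything to the elementary identity
\begin{align*}
\operatorname{ord}_m(a^k) = \frac{\operatorname{ord}_m(a)}{\gcd(\operatorname{ord}_m(a),k)},
\end{align*}
which holds for any $a$ coprime to $m$ and any positive integer $k$, together with Lemma \ref{maxalpha}. Applying this identity with $a = 2^l$, $k = 2$ and $m = p^i$ gives $\operatorname{ord}_{p^i}(2^{2l}) = \operatorname{ord}_{p^i}(2^l)/\gcd(\operatorname{ord}_{p^i}(2^l),2)$, so the whole statement hinges on the parity of $\operatorname{ord}_{p^i}(2^l)$, which is controlled by Lemma \ref{maxalpha}.

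Next I would record that, by Lemma \ref{maxalpha}, for every $i \geq 1$ we have $\operatorname{ord}_{p^i}(2^l) = p^{\max(i-\alpha,0)}\operatorname{ord}_p(2^l)$, and since $p$ is odd the factor $p^{\max(i-\alpha,0)}$ is odd; hence $\operatorname{ord}_{p^i}(2^l)$ has the same parity as $\operatorname{ord}_p(2^l)$ for all $i$. The proof then splits into the two advertised cases. If $\operatorname{ord}_p(2^l)$ is odd, then $\operatorname{ord}_{p^i}(2^l)$ is odd for all $i$, the gcd above equals $1$, so $\operatorname{ord}_{p^i}(2^{2l}) = \operatorname{ord}_{p^i}(2^l)$, and substituting the value of $\operatorname{ord}_{p^i}(2^l)$ from Lemma \ref{maxalpha} yields statement (1). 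If $\operatorname{ord}_p(2^l)$ is even, the same parity observation shows the gcd equals $2$, so $\operatorname{ord}_{p^i}(2^{2l}) = \operatorname{ord}_{p^i}(2^l)/2$, and substituting from Lemma \ref{maxalpha} gives statement (2).

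For the ``in particular'' clause I would not redo the case analysis but instead apply the identity directly with $a = 2$ and $k = 2l$: when $2$ is a primitive root modulo $p^2$ it is a primitive root modulo $p^i$ for every $i \geq 1$ (the case $i = 1$ and the lifting to $i \geq 2$ are exactly what is invoked in the proof of Lemma \ref{maxalpha}), so $\operatorname{ord}_{p^i}(2) = \phi(p^i) = p^{i-1}(p-1)$ and therefore
\begin{align*}
\operatorname{ord}_{p^i}(2^{2l}) = \frac{\operatorname{ord}_{p^i}(2)}{\gcd(\operatorname{ord}_{p^i}(2),2l)} = \frac{p^{i-1}(p-1)}{\gcd(p^{i-1}(p-1),2l)}.
\end{align*}
The only step that genuinely needs an argument is the elementary identity itself: one checks that raising $a^k$ to the power $\operatorname{ord}_m(a)/\gcd(\operatorname{ord}_m(a),k)$ gives $1 \bmod m$, and that no smaller positive exponent works, since such an exponent would force $\operatorname{ord}_m(a)$ to divide a proper multiple of it. I therefore expect the main (and rather mild) obstacle to be purely this bookkeeping — getting the parity observation and the two substitutions aligned exactly with the piecewise formula of Lemma \ref{maxalpha}; there is no structural difficulty beyond that.
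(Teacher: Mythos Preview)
Your proposal is correct and follows essentially the same approach as the paper: both use the identity $\operatorname{ord}_{p^i}(2^{2l})=\operatorname{ord}_{p^i}(2^{l})/\gcd(\operatorname{ord}_{p^i}(2^{l}),2)$, observe via Lemma~\ref{maxalpha} that $\operatorname{ord}_{p^i}(2^l)$ and $\operatorname{ord}_{p}(2^l)$ have the same parity (since the extra factor $p^{\max(i-\alpha,0)}$ is odd), and then split on that parity. Your treatment of the ``in particular'' clause is in fact more explicit than the paper's, which simply absorbs it into the reference to Lemma~\ref{maxalpha}.
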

\begin{proof}   From Lemma \ref{maxalpha},  $\operatorname{ord}_{p}(2^{l})$  and  $\operatorname{ord}_{p^i}(2^{l})$  have the same parity for all positive integers $i$. Since $\operatorname{ord}_{p^i}(2^{2l})=\frac{\operatorname{ord}_{p^i}(2^{l})}{\gcd( \operatorname{ord}_{p^i}(2^{l}),2)}$  and
    \[\gcd( \operatorname{ord}_{p^i}(2^{l}),2)=\begin{cases}
    1 &\text{ if }  \operatorname{ord}_{p}(2^{l}) \text{ is odd},\\
    2 &\text{ if }  \operatorname{ord}_{p}(2^{l}) \text{ is even},
    \end{cases}\]
    the results follow from Lemma \ref{char:good}.
\end{proof}

\subsection{Hermitian Self-Dual Cyclic Codes of Length $2^\nu p^r$}
In this subsection,  an explicit formula for the number of Hermitian self-dual cyclic codes of length $2^\nu p^r $  over $\mathbb{F}_{2^{2l}}$ is given  together with  an efficient  algorithm to compute the number of such self-dual codes.

\begin{theorem}\label{tau} Let $p$ be an odd prime  and $r$ be a positive integer. Let $\alpha $ be the largest positive integer such that $p\nmid \operatorname{ord}_{p^\alpha}(2^l)$ and let  $\gamma$ be the  integer such that $2^\gamma || \operatorname{ord}_{p}(2)$ .   Then
    {      \begin{align}
        \label{eq:simTau} \tau(p^r,l) =\begin{cases}
        \frac{ \gcd(\operatorname{ord}_{p}(2),l)}{2{\operatorname{ord}_{p}(2)}} \left(  {p^{\alpha}-1}+   (p-1)(r-\alpha)p^{\alpha-1} \right)& \text{ if }    2^\gamma | l,\\
        \frac{ \gcd(\operatorname{ord}_{p}(2),l)}{{\operatorname{ord}_{p}(2)}} \left(  {p^{\alpha}-1}+   (p-1)(r-\alpha)p^{\alpha-1} \right)& \text{ if }      2^{\gamma-1} \nmid l,  \\
        0& \text{ if }     2^{\gamma-1} ||  l.
        \end{cases}\end{align}
    }
    {  In particular, if $2$ is a primitive root modulo $p^2$, then for any positive integer $r$
        \begin{align} \label{simTauPri} \tau(p^r,l) =\begin{cases}
        \frac{1}{2}\left( \sum_{i=1}^{r} \operatorname{gcd}(p^{i-1}(p-1),2l)\right)   & \text{ if }   2^{\gamma} |  l  \text{ or }2^{\gamma-1} \nmid  l ,\\
        0& \text{ if }   2^{\gamma-1} || l.
        \end{cases}\end{align}}
    
\end{theorem}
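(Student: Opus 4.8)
\emph{Proof proposal.} The plan is to expand $\tau(p^r,l)$ straight from the definition \eqref{eq:tau}, collapse the sum using $\lambda_l(p^i)=\lambda_l(p)$ (Corollary \ref{char:oddgood}(3)) and $\lambda_l(1)=0$, and then run the same case split used for $t(p^r,l)$ in Theorem \ref{tvalue}, now governed by how $2^\gamma$ compares with the $2$-adic valuation of $l$. First I would record
\[
\tau(p^r,l)=\frac12\sum_{i=0}^{r}\lambda_l(p^i)\frac{\phi(p^i)}{\operatorname{ord}_{p^i}(2^{2l})}
=\frac{\lambda_l(p)}{2}\sum_{i=1}^{r}\frac{p^{i-1}(p-1)}{\operatorname{ord}_{p^i}(2^{2l})},
\]
so that everything is controlled by the single bit $\lambda_l(p)$ and by the values $\operatorname{ord}_{p^i}(2^{2l})$, both of which are pinned down by the parity of $\operatorname{ord}_{p}(2^l)$ via Corollaries \ref{lem:odd-gamma}, \ref{char:oddgood} and Lemma \ref{order2l}.

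Next I would treat the three regimes. If $2^\gamma\mid l$, then $\operatorname{ord}_{p}(2^l)$ is odd (Corollary \ref{lem:odd-gamma}(1)), hence $\lambda_l(p)=1$, and Lemma \ref{order2l}(1) gives $\operatorname{ord}_{p^i}(2^{2l})=\operatorname{ord}_{p^i}(2^l)$; thus $\tau(p^r,l)=t(p^r,l)$, which is already evaluated in the first branch of \eqref{eq:simt}, giving the first line of \eqref{eq:simTau}. If $2^{\gamma-1}\mathbin{\|}l$ (so $\gamma\ge1$), then $2\mathbin{\|}\operatorname{ord}_{p}(2^l)$ (Corollary \ref{lem:odd-gamma}(2)), hence $\lambda_l(p)=0$ by Corollary \ref{char:oddgood}(2), and the sum is empty, giving $\tau(p^r,l)=0$. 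If $2^{\gamma-1}\nmid l$ (so $\gamma\ge1$), then $4\mid\operatorname{ord}_{p}(2^l)$ (Corollary \ref{lem:odd-gamma}(3)), hence $\lambda_l(p)=1$, while $\operatorname{ord}_{p}(2^l)$ is even so Lemma \ref{order2l}(2) gives $\operatorname{ord}_{p^i}(2^{2l})=\operatorname{ord}_{p^i}(2^l)/2$. Substituting the closed form of $\operatorname{ord}_{p^i}(2^l)$ from Lemma \ref{maxalpha} (value $\operatorname{ord}_{p}(2^l)$ for $i\le\alpha$ and $p^{i-\alpha}\operatorname{ord}_{p}(2^l)$ for $i>\alpha$), carrying out the geometric sum $\sum_{i=1}^{\alpha}p^{i-1}+(r-\alpha)p^{\alpha-1}=\frac{p^{\alpha}-1}{p-1}+(r-\alpha)p^{\alpha-1}$ exactly as in the proof of Theorem \ref{tvalue}, and replacing $\operatorname{ord}_{p}(2^l)$ by $\operatorname{ord}_{p}(2)/\gcd(\operatorname{ord}_{p}(2),l)$ via \eqref{order}, one lands on the second line of \eqref{eq:simTau}; the extra factor $2$ relative to the $2^\gamma\mid l$ branch is precisely the halving of the order.

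For the ``in particular'' statement, when $2$ is a primitive root modulo $p^2$ Lemma \ref{order2l} gives $\operatorname{ord}_{p^i}(2^{2l})=p^{i-1}(p-1)/\gcd(p^{i-1}(p-1),2l)$, so the displayed sum becomes $\tau(p^r,l)=\frac{\lambda_l(p)}{2}\sum_{i=1}^{r}\gcd(p^{i-1}(p-1),2l)$; combining Corollary \ref{char:oddgood} with Corollary \ref{lem:odd-gamma} rewrites $\lambda_l(p)=1$ as ``$2^\gamma\mid l$ or $2^{\gamma-1}\nmid l$'' and $\lambda_l(p)=0$ as ``$2^{\gamma-1}\mathbin{\|}l$'', yielding \eqref{simTauPri}.

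The main obstacle is not any single computation but the bookkeeping that makes the three conditions in \eqref{eq:simTau} mutually exclusive and exhaustive: when $\gamma\ge1$ they correspond to the $2$-adic valuation of $l$ being $\ge\gamma$, equal to $\gamma-1$, and $\le\gamma-2$ respectively, while for $\gamma=0$ only ``$2^{\gamma}\mid l$'' can hold and one must check the other two conditions are then vacuous. One must also be careful to invoke the correct branch of Lemma \ref{order2l} in each case so that the $\operatorname{ord}_{p^i}(2^{2l})$ versus $\operatorname{ord}_{p^i}(2^l)$ bookkeeping, and hence the factor of $2$, is tracked correctly; the arithmetic itself is a direct transcription of the evaluation already carried out for $t(p^r,l)$.
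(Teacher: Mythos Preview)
Your proposal is correct and follows essentially the same route as the paper: the same three-way split via Corollary~\ref{lem:odd-gamma}, the same use of Corollary~\ref{char:oddgood} to evaluate $\lambda_l(p)$, the same invocation of Lemma~\ref{order2l} to pass from $\operatorname{ord}_{p^i}(2^{2l})$ to $\operatorname{ord}_{p^i}(2^{l})$, and the same geometric-sum computation borrowed from Theorem~\ref{tvalue}. The only cosmetic difference is that in the $2^\gamma\mid l$ branch you observe $\tau(p^r,l)=t(p^r,l)$ and quote \eqref{eq:simt} directly, whereas the paper re-derives that value in one line; your added remarks on exhaustiveness and the $\gamma=0$ degeneracy are helpful bookkeeping the paper leaves implicit.
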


\begin{proof} From  Corollary \ref{lem:odd-gamma},  $ 2^{\gamma-1} || l $ if and only if  $2|| \operatorname{ord}_{p}(2^l)$ which is equivalent to    $\chi_l(p)=0$ by Corollary  \ref{char:oddgood}. In this case,   $\lambda_l(p^i)=0$ for all $1\leq i\leq r$. Equivalently, $\tau(p^r,l) =0$ if and only if $ 2^{\gamma-1} || l $.
    
    Assume that   $2^\gamma | l$ or $ 2^{\gamma-1} \nmid  l$. By Corollary \ref{lem:odd-gamma}, it follows  that  $\operatorname{ord}_{p}(2^l)$ is odd or $4|\operatorname{ord}_{p}(2^l)$.  
    Consider the following two cases.
    
    \noindent{\bf Case 1:}   $\operatorname{ord}_{p}(2^l)$ is odd.  From the definition of $\tau$ in  \eqref{eq:tau} and     Lemma \ref{order2l}, we have
    \begin{align} \tau (p^r,l) &= \frac{1}{2} \sum_{i=0}^{r} \lambda_l(p^i)\frac{\phi(p^i)}{\operatorname{ord}_{p^i}(2^{2l})}\\
    &=\frac{1}{2} \sum_{i=1}^{r} \frac{p^{i-1}(p-1)}{\operatorname{ord}_{p^i}(2^{l})}\\
    &{=\frac{ \gcd(\operatorname{ord}_{p}(2),l)}{2{\operatorname{ord}_{p}(2)}} \left(  {p^{\alpha}-1}+   (p-1)(r-\alpha)p^{\alpha-1} \right).}
    \end{align}

    \noindent{\bf Case 2:}   $4|\operatorname{ord}_{p}(2^l)$.
    From  \eqref{eq:tau}    and  Lemma \ref{order2l},  it follows that
    \begin{align} \tau (p^r,l) &= \frac{1}{2} \sum_{i=0}^{r} \lambda_l(p^i)\frac{\phi(p^i)}{\operatorname{ord}_{p^i}(2^{2l})}\\
    &=\frac{1}{2} \sum_{i=1}^{r} \frac{2p^{i-1}(p-1)}{\operatorname{ord}_{p^i}(2^{l})}\\
    &{=\frac{ \gcd(\operatorname{ord}_{p}(2),l)}{{\operatorname{ord}_{p}(2)}} \left(  {p^{\alpha}-1}+   (p-1)(r-\alpha)p^{\alpha-1} \right).}
    \end{align}
    
    Finally, assume further  that $2$ is a primitive root modulo $p^2$. By Lemma \ref{maxalpha}, we have  $\operatorname{ord}_{p^i}(2^l)=    \frac{p^{i-1}(p-1) }{\gcd(p^{i-1}(p-1),l)}$ for all $1\leq  i \leq r$.   Hence,
    \begin{align} \tau(p^r,l) &= \frac{1}{2} \sum_{i=0}^{r} \chi_l(p^i)\frac{\phi(p^i)}{\operatorname{ord}_{p^i}(2^{2l})}\\
    &=\frac{1}{2} \sum_{i=1}^{r} \frac{p^{i-1}(p-1)}{ \frac{p^{i-1}(p-1) }{\gcd(p^{i-1}(p-1),2l)}}\\
    &= \frac{1}{2} \sum_{i=1}^{r} \gcd(p^{i-1}(p-1),2l)
    \end{align}
    as desired.
\end{proof}

\begin{corollary}
    Let $l$ be a positive integer and $p$ be an odd prime satisfying $l$ is not divisible by $p$. If $2$ is a primitive root modulo $p^2$ and
    $\operatorname{ord}_p{(2^l)}$ is odd or $\operatorname{ord}_p{(2^l)}$ is divisible by $4$,  then
    $\tau(p^r,l) = r\operatorname{gcd}(\frac{p-1}{2},l)$ for all positive integers $r$.
\end{corollary}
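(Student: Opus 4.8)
The plan is to derive the corollary directly from the primitive-root formula \eqref{simTauPri} of Theorem~\ref{tau}, which applies because $2$ is a primitive root modulo $p^2$ by hypothesis. First I would record that, under this hypothesis, $2$ is also a primitive root modulo $p$, so $\operatorname{ord}_p(2)=p-1$ is even; hence the integer $\gamma$ with $2^\gamma\|\operatorname{ord}_p(2)$ satisfies $\gamma\ge 1$, and Corollary~\ref{lem:odd-gamma}(3) is available to us.

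Next I would match the hypothesis of the corollary with the nonzero branch of \eqref{simTauPri}. By Corollary~\ref{lem:odd-gamma}(1), $\operatorname{ord}_p(2^l)$ is odd precisely when $2^\gamma\mid l$, and by Corollary~\ref{lem:odd-gamma}(3), $4\mid\operatorname{ord}_p(2^l)$ precisely when $2^{\gamma-1}\nmid l$. Thus the assumption ``$\operatorname{ord}_p(2^l)$ is odd or $4\mid\operatorname{ord}_p(2^l)$'' is exactly the condition ``$2^\gamma\mid l$ or $2^{\gamma-1}\nmid l$'' appearing in \eqref{simTauPri}, so
\[\tau(p^r,l)=\frac12\sum_{i=1}^{r}\gcd\!\bigl(p^{i-1}(p-1),2l\bigr).\]
Since $p$ is odd and $p\nmid l$, we have $p\nmid 2l$, hence $\gcd\bigl(p^{i-1}(p-1),2l\bigr)=\gcd(p-1,2l)$ for every $i\ge 1$; the $r$ summands are all equal, giving $\tau(p^r,l)=\frac r2\gcd(p-1,2l)$. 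Finally, because $p-1$ is even, the identity $\gcd(p-1,2l)=2\gcd\bigl(\tfrac{p-1}{2},l\bigr)$ (pull the common factor $2$ out of the gcd) yields $\tau(p^r,l)=r\gcd\bigl(\tfrac{p-1}{2},l\bigr)$, as claimed.

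There is no genuine obstacle here: the result is a direct specialization of Theorem~\ref{tau} followed by two routine gcd manipulations. The only step requiring a moment's care is translating the parity/divisibility condition on $\operatorname{ord}_p(2^l)$ into the branching condition of \eqref{simTauPri} through Corollary~\ref{lem:odd-gamma}, which is exactly why I note $\gamma\ge 1$ at the outset.
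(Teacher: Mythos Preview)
Your proof is correct and follows essentially the same approach as the paper, which leaves this corollary as a direct consequence of Theorem~\ref{tau} (specifically the primitive-root formula~\eqref{simTauPri}). Your explicit justification---translating the hypothesis on $\operatorname{ord}_p(2^l)$ into the branching condition via Corollary~\ref{lem:odd-gamma}, then simplifying the sum using $p\nmid 2l$ and $\gcd(p-1,2l)=2\gcd\bigl(\tfrac{p-1}{2},l\bigr)$---fills in precisely the routine details the paper omits.
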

\begin{example}\label{example_tau} Let $p=11$, $l=4$ and $r=3$. We can see that $2$ is a primitive root modulo $11^2$ and $\operatorname{ord}_{11}{(2^4)} = 5$ is odd.
    Thus $\tau(11^3,4) = 3\operatorname{gcd}(5,4)$=3.
\end{example}

The results on    $\tau(p^r,l)$  discussed above can be  summarized in  Algorithm \ref{al3}.

\begin{figure}[!hbt]
    \centering
    \parbox{12cm}{ \hrule \vskip1em

        For  an odd prime $p$ and positive integers $l$ and $r$, do the following steps.
        \begin{enumerate}
            \item     Compute $\operatorname{ord}_{p}{(2)}$.
            \item   Determine $\gamma$ such that  $2^\gamma || \operatorname{ord}_{p}{(2)}$.
            \begin{enumerate}
                \item[2.1]  If $2^{\gamma-1} || l$, then $\tau(p^r,l)=0$ by \eqref{eq:simTau}. Done.
                \item[2.2] If $2^\gamma| l$ or $2^{\gamma -1}\nmid l$, then compute $\operatorname{ord}_{p^2}{(2)}$.
                \begin{enumerate}
                    \item[2.2.1]  If  $\operatorname{ord}_{p^2}{(2)}=p(p-1)$, then evaluate \eqref{simTauPri}. Done.
                    \item[2.2.2]   If  $\operatorname{ord}_{p^2}{(2)}\ne p(p-1)$, then do the following steps.
                    \begin{enumerate}[i)]
                        \item Compute $\operatorname{ord}_{p^r}{(2^l)}$.
                        \item Determine  the largest
                        integer $\alpha$ such that $p \nmid  \operatorname{ord}_{p^\alpha}(2^l)$  (by Corollary \ref{lem:alpha}).
                        \item Evaluate   \eqref{eq:simTau}. Done.
                    \end{enumerate}
                \end{enumerate}
            \end{enumerate}
        \end{enumerate}
        \hrule
    }
    \caption{Steps in  Computing $\tau(p^r,l)$}     \label{al3}
\end{figure}

Similar to the Euclidean case, a direct computation of  $\tau(p^r,l)$  from \eqref{eq:t} requires the values of  $\chi_l(p^i), \phi(p^i)$ and $\operatorname{ord}_{p^i}{(2^l)}$ for all $0\leq i\leq r$.  The complexity can be reduced  using   Algorithm \ref{al3}   since it  requires  only $\operatorname{ord}_{p}{(2)}$, $\operatorname{ord}_{p^2}{(2)}$, $\operatorname{ord}_{p^r}{(2^l)}$ and some basic expressions in \eqref{simTauPri} or \eqref{eq:simTau}.

From Theorems \ref{ESD-form} and \ref{HSD-form},   a Euclidean self-dual cyclic code over  $\mathbb{F}_{2^l}$  exists  for all integers $l$ but an Hermitian  self-dual cyclic code over  $\mathbb{F}_{2^l}$  exists if and only if $l$ is even.   Over $\mathbb{F}_{2^{2l}}$,   both the  Euclidean and Hermitian self-dual cyclic codes always exist.  The numbers of self-dual codes in the two families can be  compared in terms of $t(p^r,2l) $ and $\tau(p^r,l)$ as follows.
\begin{proposition}
    Let $p$ be an odd prime  and $r$ be positive integers. Let   $\gamma$ be the  integer such that $2^\gamma || \operatorname{ord}_{p}(2)$ .   Then one of the following holds
    \begin{enumerate}
        \item   If  $2^\gamma | l$, then  $\tau(p^r,l) = t(p^r,2l) $.
        \item  If $ 2^{\gamma-1} ||  l$, then $0  = \tau(p^r,l) < t(p^r,2l)$.
        \item   If $ 2^{\gamma-1} \nmid l $, then  $\tau(p^r,l) >  t(p^r,2l) = 0$ .
    \end{enumerate}
\end{proposition}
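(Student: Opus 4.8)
The plan is to compare the two formulas by tracking the highest power of $2$ that appears, and to reduce everything to the two closed-form expressions already proved: equation \eqref{eq:simt} for $t(p^r,l)$ in Theorem \ref{tvalue} and equation \eqref{eq:simTau} for $\tau(p^r,l)$ in Theorem \ref{tau}. The key observation is that $\tau(p^r,l)$ is governed by the position of $2^i\,\|\,l$ relative to $2^\gamma\,\|\,\operatorname{ord}_p(2)$ exactly through Corollary \ref{lem:odd-gamma}, while $t(p^r,2l)$ is governed by the same kind of condition but with $l$ replaced by $2l$; since $2^i\,\|\,l$ forces $2^{i+1}\,\|\,2l$, the three regimes $2^\gamma\mid l$, $2^{\gamma-1}\,\|\,l$, $2^{\gamma-1}\nmid l$ translate cleanly into conditions on $2l$.

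First I would fix notation: write $2^i\,\|\,l$, so $2^{i+1}\,\|\,2l$. For part (1), assume $2^\gamma\mid l$, i.e.\ $i\geq\gamma$. Then $i+1>\gamma$, so $2^\gamma\mid 2l$ as well, and both \eqref{eq:simt} and \eqref{eq:simTau} fall into their first (top) branch; it remains only to check that the two top-branch expressions agree. The expression in \eqref{eq:simt} is $\frac{\gcd(\operatorname{ord}_p(2),l)}{2\operatorname{ord}_p(2)}\bigl(p^\alpha-1+(p-1)(r-\alpha)p^{\alpha-1}\bigr)$ evaluated at $l$, and the corresponding top branch of \eqref{eq:simTau} is the same expression evaluated at $l$ as well — but one must be careful that $\alpha$ is defined via $\operatorname{ord}_{p^\alpha}(2^l)$ in both statements (it is), and that $\gcd(\operatorname{ord}_p(2),2l)=\gcd(\operatorname{ord}_p(2),l)$ when $2^\gamma\mid l$, because then the full power of $2$ dividing $\operatorname{ord}_p(2)$, namely $2^\gamma$, already divides $l$, so doubling $l$ adds no new common factor of $2$ (and $\operatorname{ord}_p(2)$ contributes no odd prime beyond those shared with $l$ that $2l$ does not already share). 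Hence $t(p^r,2l)$'s top branch equals $\tau(p^r,l)$'s top branch. For part (2), assume $2^{\gamma-1}\,\|\,l$, i.e.\ $i=\gamma-1$. Then $\tau(p^r,l)=0$ by the third branch of \eqref{eq:simTau}. Meanwhile $2^\gamma\,\|\,2l$, so $2^\gamma\mid 2l$ and $t(p^r,2l)$ falls into the first (nonzero) branch of \eqref{eq:simt}; since $p^\alpha-1+(p-1)(r-\alpha)p^{\alpha-1}>0$ and the gcd factor is positive, $t(p^r,2l)>0$. Thus $0=\tau(p^r,l)<t(p^r,2l)$. For part (3), assume $2^{\gamma-1}\nmid l$, i.e.\ $i\leq\gamma-2$, so $i+1\leq\gamma-1<\gamma$, meaning $2^\gamma\nmid 2l$; hence $t(p^r,2l)=0$ by the second branch of \eqref{eq:simt}. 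On the other hand $2^{\gamma-1}\nmid l$ puts $\tau(p^r,l)$ into the second branch of \eqref{eq:simTau}, which is $\frac{\gcd(\operatorname{ord}_p(2),l)}{\operatorname{ord}_p(2)}\bigl(p^\alpha-1+(p-1)(r-\alpha)p^{\alpha-1}\bigr)>0$. Therefore $\tau(p^r,l)>t(p^r,2l)=0$.

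The main obstacle I anticipate is the bookkeeping in part (1): verifying that the two top-branch formulas genuinely coincide, which amounts to confirming both that the integer $\alpha$ (defined through $\operatorname{ord}_{p^\alpha}(2^l)$, the same quantity in both theorems) is unchanged, and that $\gcd(\operatorname{ord}_p(2),2l)=\gcd(\operatorname{ord}_p(2),l)$ under the hypothesis $2^\gamma\mid l$. One could alternatively sidestep the gcd computation by arguing directly: when $2^\gamma\mid l$, Corollary \ref{lem:odd-gamma}(1) gives that $\operatorname{ord}_p(2^l)$ is odd, and likewise $\operatorname{ord}_p(2^{2l})$ is odd, so Case 1 of the proof of Theorem \ref{tau} applies and $\tau(p^r,l)=\frac12\sum_{i=1}^r\frac{\phi(p^i)}{\operatorname{ord}_{p^i}(2^l)}$, while $\operatorname{ord}_{p^i}(2^{2l})=\operatorname{ord}_{p^i}(2^l)$ by Lemma \ref{order2l}(1), and $t(p^r,2l)=\frac12\sum_{i=1}^r\frac{\phi(p^i)}{\operatorname{ord}_{p^i}(2^{2l})}$ since $\chi_{2l}(p^i)=1$ (as $\operatorname{ord}_p(2^{2l})$ is odd, via Corollary \ref{char:good}); the two sums are then manifestly equal term by term. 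This routed-through-Case-1 argument is probably the cleanest and avoids any delicate gcd identity. Parts (2) and (3) are essentially immediate once the parity of $\operatorname{ord}_p(2^{2l})$ is read off from the position of $2^{i+1}\,\|\,2l$ relative to $2^\gamma$.
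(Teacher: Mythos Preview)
Your proof is correct and follows essentially the same approach as the paper's (very terse) proof, which for part (1) also reduces to the identity $\gcd(\operatorname{ord}_p(2),2l)=\gcd(\operatorname{ord}_p(2),l)$ and for parts (2)--(3) simply invokes Theorems \ref{tvalue} and \ref{tau}. One small correction: in your first route for part (1), the $\alpha$ appearing in $t(p^r,2l)$ is by definition the largest integer with $p\nmid\operatorname{ord}_{p^\alpha}(2^{2l})$, not $2^l$ --- your parenthetical ``(it is)'' is mistaken --- but your alternative route via Lemma \ref{order2l}(1) correctly shows $\operatorname{ord}_{p^i}(2^{2l})=\operatorname{ord}_{p^i}(2^l)$ under $2^\gamma\mid l$, so the two $\alpha$'s coincide and the argument stands (the paper's proof glosses over this point entirely).
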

\begin{proof}
    For the first part, it suffices to show that $\operatorname{gcd}(\operatorname{ord}_p(2),l) = \operatorname{gcd}(\operatorname{ord}_p(2),2l)$. Since $2^\gamma || \operatorname{ord}_{p}(2)$ and $2^\gamma |l$, we have
    $\operatorname{gcd}(\operatorname{ord}_p(2),l) = \operatorname{gcd}(\operatorname{ord}_p(2),2l)$ as desired.
    
    For the rest of the theorem, it follows easily from Theorems \ref{tvalue} and \ref{tau}.
\end{proof}

\subsection{Hermitian  Self-Dual Cyclic Codes of Length $2^\nu p^rq^s$}

From Lemma \ref{odd-even}, we have the following lemma.
\begin{lemma}\label{lem3.6}
    Let $p$ and $q$  be distinct odd primes and let $l$ be a positive integer. Then the following statements hold.
    \begin{enumerate}
        \item $\lambda_l(pq)=1$ if and only if  $\lambda_l(p)=1$ or $\lambda_l(q)=1$.
        \item $\lambda_l(p^iq^j)=\lambda(pq)$ for all positive integers $i$ and $j$.
    \end{enumerate}
\end{lemma}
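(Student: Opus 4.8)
The plan is to deduce Lemma \ref{lem3.6} from the characterization of $\lambda_l$ in Lemma \ref{odd-even}, exactly in the way Lemma \ref{chipq} is obtained from Lemma \ref{goodP}, but the argument here is genuinely simpler because $\lambda_l$ only depends on whether $2\,\|\,\operatorname{ord}_p(2^l)$ for each prime $p\mid d$, a condition that is ``local'' at each prime in a way that $\chi_l$ is not (no comparison of the $2$-adic valuations $\gamma,\beta$ across the two primes is needed).

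For part (1), I would argue both directions using Lemma \ref{odd-even} applied with $j=pq$, $j=p$, and $j=q$. By that lemma, $\lambda_l(pq)=0$ if and only if $2\,\|\,\operatorname{ord}_p(2^l)$ \emph{and} $2\,\|\,\operatorname{ord}_q(2^l)$, i.e.\ (by Corollary \ref{char:oddgood}(2)) if and only if $\lambda_l(p)=0$ and $\lambda_l(q)=0$. Taking the contrapositive gives: $\lambda_l(pq)=1$ if and only if $\lambda_l(p)=1$ or $\lambda_l(q)=1$, which is precisely the claim. So part (1) is essentially a one-line consequence once the three instances of Lemma \ref{odd-even} are written out; the only thing to be careful about is that Lemma \ref{odd-even} requires $j>1$ and odd, which holds for $pq$, $p$, and $q$ since $p,q$ are distinct odd primes.

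For part (2), fix positive integers $i,j$ and apply Lemma \ref{odd-even} with the odd integer $p^iq^j>1$: its value $\lambda_l(p^iq^j)$ is $0$ iff $2\,\|\,\operatorname{ord}_p(2^l)$ and $2\,\|\,\operatorname{ord}_q(2^l)$ (the set of primes dividing $p^iq^j$ is $\{p,q\}$, independent of $i,j$), which is exactly the condition characterizing $\lambda_l(pq)=0$. Hence $\lambda_l(p^iq^j)=\lambda_l(pq)$. Alternatively one can just invoke Corollary \ref{char:oddgood}(3) together with part (1): $\lambda_l(p^iq^j)=1$ iff $\lambda_l(p^i)=1$ or $\lambda_l(q^j)=1$ iff $\lambda_l(p)=1$ or $\lambda_l(q)=1$ iff $\lambda_l(pq)=1$.

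I do not anticipate a real obstacle: the whole lemma is a formal unwinding of Lemma \ref{odd-even} and Corollary \ref{char:oddgood}. The only minor point worth stating explicitly in the write-up is why the prime-divisor condition in Lemma \ref{odd-even} reduces, for $j\in\{pq,\,p^iq^j\}$, to a statement purely about $\operatorname{ord}_p(2^l)$ and $\operatorname{ord}_q(2^l)$ — namely that $p^iq^j$ and $pq$ have the same set of prime divisors $\{p,q\}$ — so that no new condition at higher prime powers arises, in contrast to the more delicate situation for $\chi_l$ handled in Lemma \ref{chipq}.
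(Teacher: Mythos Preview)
Your proposal is correct and follows essentially the same approach as the paper, which simply notes that the lemma follows from Lemma~\ref{odd-even}. Your write-up supplies the details the paper omits, but the underlying argument---that $\lambda_l(d)$ depends only on whether $2\,\|\,\operatorname{ord}_p(2^l)$ for each prime $p\mid d$, and hence only on the set of prime divisors of $d$---is precisely what the paper intends.
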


\begin{theorem} \label{thm3.7} Let $p$ and $q$ be distinct odd primes and let $r,s,$ and $l$ be positive integers.  Then
    \begin{align*} \tau(p^rq^s,l) = \lambda_l(p)\tau(p^r,l)+\lambda_l(q)\tau(q^s,l)+\lambda_l(pq)\sum_{i=1}^{r}\sum_{j=1}^{s} \frac{\phi(p^iq^j)}{ \operatorname{lcm}\left( \operatorname{ord}_{p^i}{(2^{2l})}, \operatorname{ord}_{q^j}{(2^{2l})}\right)}.
    \end{align*}
    
\end{theorem}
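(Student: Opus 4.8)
The plan is to mirror the proof of Theorem~\ref{chivalue}: expand $\tau(p^rq^s,l)$ directly from the definition \eqref{eq:tau} and reorganize the sum over the divisors of $p^rq^s$. Every such divisor is uniquely of the form $p^iq^j$ with $0\le i\le r$ and $0\le j\le s$, so I would partition the sum into three blocks: the pure prime powers $p^i$ with $0\le i\le r$, the pure prime powers $q^j$ with $1\le j\le s$, and the mixed terms $p^iq^j$ with $1\le i\le r$ and $1\le j\le s$. The divisor $d=1$ lies in the first block and contributes nothing, since $1\mid(2^{ls}+1)$ for every $s$ forces $\lambda_l(1)=0$.

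First I would treat the pure $p$-power block. By the definition \eqref{eq:tau} of $\tau(p^r,l)$, the quantity $\frac12\sum_{i=0}^{r}\lambda_l(p^i)\frac{\phi(p^i)}{\operatorname{ord}_{p^i}(2^{2l})}$ is exactly $\tau(p^r,l)$; to put it into the form $\lambda_l(p)\tau(p^r,l)$ that appears in the statement I would invoke Corollary~\ref{char:oddgood}: if $\lambda_l(p)=1$ the inserted factor changes nothing, while if $\lambda_l(p)=0$ then $\lambda_l(p^i)=0$ for every $i\ge1$, whence $\tau(p^r,l)=0$ and both expressions vanish. The pure $q$-power block is handled identically and yields $\lambda_l(q)\tau(q^s,l)$.

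For the mixed block I would use three standard ingredients: the multiplicativity $\phi(p^iq^j)=\phi(p^i)\phi(q^j)$; the fact that for coprime moduli $\operatorname{ord}_{p^iq^j}(2^{2l})=\operatorname{lcm}\!\left(\operatorname{ord}_{p^i}(2^{2l}),\operatorname{ord}_{q^j}(2^{2l})\right)$, which follows from the isomorphism $(\mathbb{Z}/p^iq^j\mathbb{Z})^{\times}\cong(\mathbb{Z}/p^i\mathbb{Z})^{\times}\times(\mathbb{Z}/q^j\mathbb{Z})^{\times}$; and Lemma~\ref{lem3.6}(2), which gives $\lambda_l(p^iq^j)=\lambda_l(pq)$ for all $i,j\ge1$. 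Substituting these, the mixed block collapses to $\lambda_l(pq)$ times a double sum over $1\le i\le r$ and $1\le j\le s$ of $\phi(p^iq^j)/\operatorname{lcm}\!\left(\operatorname{ord}_{p^i}(2^{2l}),\operatorname{ord}_{q^j}(2^{2l})\right)$; combining the three blocks then produces the stated formula.

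The computation is short, and the only thing requiring genuine care — the real obstacle, such as it is — is the bookkeeping: keeping the overall factor $\frac12$ from \eqref{eq:tau} and the boundary indices $i=0$, $j=0$ straight so that $d=1$ is counted exactly once and no mixed term is double counted, and correctly matching the plain coefficients coming out of \eqref{eq:tau} with the factors $\lambda_l(p)$, $\lambda_l(q)$, $\lambda_l(pq)$ written in the statement, which is precisely where Corollary~\ref{char:oddgood} and Lemma~\ref{lem3.6} are used.
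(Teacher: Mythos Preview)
Your proposal is correct and follows essentially the same route as the paper's own proof, which is even terser: it simply records that $\operatorname{ord}_{p^iq^j}(2^{2l})=\operatorname{lcm}\bigl(\operatorname{ord}_{p^i}(2^{2l}),\operatorname{ord}_{q^j}(2^{2l})\bigr)$ and then says the result follows from \eqref{eq:tau} and Lemma~\ref{lem3.6}. Your version spells out the divisor partition and the use of Corollary~\ref{char:oddgood} to attach the factors $\lambda_l(p)$ and $\lambda_l(q)$, which is exactly the bookkeeping the paper is suppressing.
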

\begin{proof}
    We note that $\operatorname{ord}_{p^iq^j}(2^{2l}) =  \operatorname{lcm}\left( \operatorname{ord}_{p^i}{(2^{2l})}, \operatorname{ord}_{q^j}{(2^{2l})}\right)$. The theorem follows from   \eqref{eq:tau} and Lemma \ref{lem3.6}.
\end{proof}


The next corollary follows from Corollary \ref{lem:odd-gamma}, Lemma \ref{lem3.6} and Theorem \ref{thm3.7}.

\begin{corollary} Let $p$ and $q$ be distinct odd primes and let $r,s,$ and $l$ be positive integers. Let $\gamma$ and $\beta$ be the integers such that  $2^\gamma|| \operatorname{ord}_p{(2)}$ and  $2^\beta|| \operatorname{ord}_q{(2)}$, respectively.
    Then one of the following statements holds.
    \begin{enumerate}
        \item   If {$2^\gamma|l$} or $2^{\gamma-1}\nmid l$, and  $ 2^\beta |l$ or $2^{\beta-1}\nmid l$, then       \begin{align*} \tau(p^rq^s,l) = \tau(p^r,l)+\tau(q^s,l)+\sum_{i=1}^{r}\sum_{j=1}^{s} \frac{\phi(p^iq^j)}{ \operatorname{lcm}\left( \operatorname{ord}_{p^i}{(2^{2l})}, \operatorname{ord}_{q^j}{(2^{2l})}\right)}.
        \end{align*}
        \item   If  {$2^\gamma|l$} or $2^{\gamma-1}\nmid l$, and  $2^{\beta-1}||l$, then
        \begin{align*} \tau(p^rq^s,l) = \tau(p^r,l)+\sum_{i=1}^{r}\sum_{j=1}^{s} \frac{\phi(p^iq^j)}{ \operatorname{lcm}\left( \operatorname{ord}_{p^i}{(2^{2l})}, \operatorname{ord}_{q^j}{(2^{2l})}\right)}.
        \end{align*}
        \item   If   $2^{\gamma-1}||l$, and {$ 2^\beta|l$} or $2^{\beta-1}\nmid l$,  then
        \begin{align*} \tau(p^rq^s,l) =  \tau(q^s,l)+\ \sum_{i=1}^{r}\sum_{j=1}^{s} \frac{\phi(p^iq^j)}{ \operatorname{lcm}\left( \operatorname{ord}_{p^i}{(2^{2l})}, \operatorname{ord}_{q^j}{(2^{2l})}\right)}.
        \end{align*}
        
        \item   If $2^{\gamma-1}||l$ and $2^{\beta-1}||l$, then \begin{align*} \tau(p^rq^s,l) = 0.
        \end{align*}
    \end{enumerate}
\end{corollary}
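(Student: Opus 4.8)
The plan is to deduce the result purely by bookkeeping from Theorem~\ref{thm3.7}, after translating the four sets of hypotheses on $\gamma$, $\beta$, and $l$ into the values of $\lambda_l(p)$ and $\lambda_l(q)$. First I would record the dictionary. By Corollary~\ref{char:oddgood}, $\lambda_l(p)=1$ exactly when $\operatorname{ord}_p(2^l)$ is odd or $4\mid\operatorname{ord}_p(2^l)$, and $\lambda_l(p)=0$ exactly when $2\|\operatorname{ord}_p(2^l)$. Feeding this through Corollary~\ref{lem:odd-gamma} (parts~1--3), the former condition becomes ``$2^\gamma\mid l$ or $2^{\gamma-1}\nmid l$'' and the latter becomes ``$2^{\gamma-1}\|l$''; for $\gamma=0$ only part~1 of Corollary~\ref{lem:odd-gamma} applies, giving $\lambda_l(p)=1$ always, which is consistent since then $2^\gamma=1\mid l$. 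The identical dictionary holds for $q$ with $\beta$ in place of $\gamma$. In particular the four hypotheses of the corollary are exactly, in order, $\{\lambda_l(p)=1,\ \lambda_l(q)=1\}$, $\{\lambda_l(p)=1,\ \lambda_l(q)=0\}$, $\{\lambda_l(p)=0,\ \lambda_l(q)=1\}$, and $\{\lambda_l(p)=0,\ \lambda_l(q)=0\}$.

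Next I would apply Lemma~\ref{lem3.6}(1), which gives $\lambda_l(pq)=1$ iff $\lambda_l(p)=1$ or $\lambda_l(q)=1$, equivalently $\lambda_l(pq)=0$ iff $\lambda_l(p)=\lambda_l(q)=0$. Substituting into the identity of Theorem~\ref{thm3.7},
\[
\tau(p^rq^s,l)=\lambda_l(p)\,\tau(p^r,l)+\lambda_l(q)\,\tau(q^s,l)+\lambda_l(pq)\sum_{i=1}^{r}\sum_{j=1}^{s}\frac{\phi(p^iq^j)}{\operatorname{lcm}\!\left(\operatorname{ord}_{p^i}(2^{2l}),\operatorname{ord}_{q^j}(2^{2l})\right)},
\]
I would specialize on each of the four regions: in region~(1) all of $\lambda_l(p),\lambda_l(q),\lambda_l(pq)$ equal $1$, so every term survives; in region~(2) we have $\lambda_l(p)=1$, $\lambda_l(q)=0$, hence $\lambda_l(pq)=1$, killing only the middle term (and $\tau(q^s,l)=0$ in any case); region~(3) is symmetric; and in region~(4) $\lambda_l(p)=\lambda_l(q)=0$ forces $\lambda_l(pq)=0$, so the right-hand side is $0$. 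This yields the four displayed formulas.

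The argument involves no analytic difficulty; the only point requiring care is the boundary behaviour of the dictionary when $\gamma$ or $\beta$ is $0$ or $1$, so that the phrases ``$2^{\gamma-1}\nmid l$'' and ``$2^{\gamma-1}\|l$'' are read consistently and remain mutually exclusive with ``$2^\gamma\mid l$''. Once this is settled, the corollary is immediate from the case split above together with Theorem~\ref{thm3.7}.
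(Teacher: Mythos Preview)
Your proposal is correct and follows essentially the same route as the paper, which simply cites Corollary~\ref{lem:odd-gamma}, Lemma~\ref{lem3.6}, and Theorem~\ref{thm3.7} without spelling out the case split. Your explicit dictionary via Corollary~\ref{char:oddgood} and your remark on the $\gamma=0$ boundary case are useful additions, but the underlying argument is the same bookkeeping from Theorem~\ref{thm3.7}.
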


Similar to the Euclidean case,    the complexity of  the direct computation of  $\tau(p^r,l)$  in \eqref{eq:tau} can be reduced using Corollary \ref{cor:Epq}.




\section*{Acknowledgments}
{This research was supported by the Thailand Research Fund and the Office of Higher Education Commission  of Thailand under Research
    Grant MRG6080012.}

\end{document}